\newtheorem{theorem}{\bf Theorem}[section]
\def\BState{\State\hskip-\ALG@thistlm}
\newcommand{\abs}[1]{{\lvert#1\rvert}} % absolute value (or seminorm)
\newcommand{\norm}[1]{{\lVert#1\rVert}} % notation for the norm
\newcommand{\jump}[1]{{\left\llbracket#1\right\rrbracket}}
\newcommand{\elprod}{\circ} % Hadamard product between matrices (elementwise multiplication)
\newcommand{\jacobian}[1]{\vec{\nabla}#1} % jacobian of a vector-valued function
\renewcommand{\d}{\operatorname{d}} % the d in the dx at the end of an integral
\newcommand{\dx}{\d\!\vx} % space differential
\newcommand{\dt}{\d\!t} % time differential
\newcommand{\ds}{\d\!s} % time differential
\newcommand{\ddt}[1]{{#1}_{t}} % time derivative (subscript t)
\newcommand{\ddn}[1]{\frac{\partial #1}{\partial \vec{n}}} % normal derivative (e.g. for boundary conditions)
\renewcommand{\vec}[1]{\bm{#1}} % make something a vector (i.e. make it bold)
\newcommand{\vx}{\vec{x}} % A vector x
\newcommand{\diag}[1]{\widehat{#1}} % notation for a matrix with the argument vector on its main diagonal
\renewcommand{\Re}{\mathbb{R}} % real numbers
\renewcommand{\u}{u} % a scalar-valued solution
\newcommand{\uvec}[1][]{{\vec{u}_{#1}}} % a vector-valued solution. has an optional argument indicating the component of the vector (e.g. \uvec[1] -> u_1)
\newcommand{\uhvec}{\uvec_h} % finite element vector solution
\newcommand{\uhtvec}{\uvec_{h,t}} % finite element vector solution time derivative
\newcommand{\uh}{\u_h} % finite element solution
\newcommand{\uht}{\u_{h,t}} % finite element solution time derivative
\newcommand{\vh}{v_h}
\newcommand{\vvec}{\vec{v}}
\newcommand{\wvec}{\vec{w}}
\newcommand{\initialcond}{\vec{\psi}} % the iniital condition
\newcommand{\diff}[1][]{\varepsilon_{#1}} % the diffusion coefficients. Optional argument indicates component of the vector (e.g. \diff[1] -> \eps_1)
\newcommand{\growth}[1][]{a_{#1}} % the growth coefficients. Optional argument as above
\newcommand{\reac}[2]{\alpha_{#1,#2}} % interaction parameters. Requires 2 arguments indicating which species are interacting (e.g. \reac{1}{2} -> \alpha_{1,2})
\newcommand{\f}{f} % nonlinear reaction term
\newcommand{\spacedim}{d}
\newcommand{\domain}{\Omega}
\newcommand{\boundary}{\partial\domain}
\newcommand{\Vh}{V_h} % discrete space
\newcommand{\Th}{\mathcal{T}_h} % mesh elements
\newcommand{\edges}{\mathcal{E}_h} % mesh edges
\newcommand{\element}{T} % a single element in the mesh
\newcommand{\edge}{e} % a single edge in the mesh
\renewcommand{\k}{k} % method polynomial degree
\newcommand{\testfn}{\chi} % a finite dimensional test function
\newcommand{\vectestfn}{\vec{\testfn}}
\newcommand{\elliprecon}[1]{\mathcal{R}#1} % the elliptic reconstruction
\newcommand{\error}{e} % the error
\newcommand{\reg}{r} % used for indicating a generic regularity in H^r
\newcommand{\timestep}{n} % symbol used to denote timesteps (so we can change it easily if needed later)
\newcommand{\fh}{\f_h} % L^2 projection of nonlinear reaction term into the FE space
\newcommand{\expconst}{(1 + 4M \Cf)}
\newcommand{\Cdom}{C_{\domain}}
\newcommand{\Cf}{C_{\f}}
\newcommand{\Celip}{C_{\text{elip}}}
\begin{document}

\title[New dynamical patterns in a cyclic competition model]{Revealing new dynamical patterns\\ in a reaction-diffusion model with cyclic competition\\ via a novel computational framework}

\author{A. Cangiani}
\author{E. H. Georgoulis}
\author{A. Yu. Morozov}
\author{O. J. Sutton}

\address[A. Cangiani and A. Yu. Morozov]{Department of Mathematics, University of Leicester, University Road, Leicester LE1 7RH, UK}
\email{Andrea.Cangiani@le.ac.uk, am379@le.ac.uk}
\address[E. H. Georgoulis]{Department of Mathematics, University of Leicester, University Road, Leicester LE1 7RH, UK \emph{and} Department of Mathematics, School of Mathematical and Physical Sciences, National Technical University of Athens, Zografou 157 80, Greece} \email{Emmanuil.Georgoulis@le.ac.uk}
\address[O. J. Sutton]{Department of Mathematics and Statistics, University of Reading,
Whiteknights, PO Box 220, Reading RG6 6AX, UK}
\email{O.Sutton@reading.ac.uk}

\keywords{ Lotka-Volterra spatial model, pattern formation, biodiversity, finite element methods, adaptivity, a posteriori error estimates}

\begin{abstract}
Understanding how patterns and travelling waves form in chemical and biological reaction-diffusion models is an area which has been widely researched, yet is still experiencing fast development.
Surprisingly enough, we still do not have a clear understanding about all possible types of dynamical regimes in classical reaction-diffusion models such as Lotka-Volterra competition models with spatial dependence.
In this work, we demonstrate some new types of wave propagation and pattern formation in a classical three species cyclic competition model with spatial diffusion, which have been so far missed in the literature.
These new patterns are characterised by a high regularity in space, but are different from patterns previously known to exist in reaction-diffusion models, and may have important applications in improving our understanding of biological pattern formation and invasion theory.
Finding these new patterns is made technically possible by using an automatic adaptive finite element method driven by a novel a posteriori error estimate which is proven to provide a reliable bound for the error of the numerical method.
We demonstrate how this numerical framework allows us to easily explore the dynamical patterns both in two and three spatial dimensions.
\end{abstract}

\maketitle

\section{Introduction}
\label{sec:introduction}

The formation of patterns and travelling waves in chemical, physical, biological and game theoretical models described by systems of advection-reaction-diffusion equations are widely researched phenomena discussed in an immense number of publications.
The pioneering works in this area were published as early as the late 1930s with the works~\cite{Fisher37,kolmogorov37} on wave propagation and Turing's demonstration of pattern forming mechanisms in the early 1950s~\cite{turing1952}.
A nice introduction to the current state of the art can be found in~\cite{Cooper12,satnoianu2000, volpert2009}.
To this day, the research area continues to experience a fast development: new types of patterns and waves have recently been reported in more complicated systems as compared to the initial models.
For example, a number of exotic patterns, such as envelope and multi-envelope quasi-solitons~\cite{tsyganov2014}, have been observed in systems with cross-diffusion (i.e. where the diffusivity matrix is non-diagonal).
Similarly, novel patterns have also been found in reaction-diffusion systems with non-local terms, i.e. in the case where the reaction terms depend not only on the local species densities but also on the species distribution in some neighbourhood, described by an integral term~\cite{A15, volpert2009}.
Furthermore, models of interactions on growing domains or in the case when the spatial parameters are variable can result in the emergence of new types of patterns with different properties from those on fixed domains or with homogeneous parameters~\cite{CGM02, page2005}.

Interestingly, despite the current trend towards investigating the behaviour of more and more sophisticated reaction-diffusion systems (e.g. including density-dependent diffusion coefficients, cross-diffusion, time delay, nonlocal terms, etc), we still miss a fundamental understanding of the possible types of dynamical regimes in well known reaction-diffusion models, some of which are included in textbooks on mathematical biology.

A notable example is the Lotka-Volterra competition model in space.
This model is well known in the literature with numerous applications in mathematical biology and areas of game theory such as voting models~\cite{Petrovskii2001, mimura2015, MuN11, AM12, Ch12, C15}.
In particular, this system is often considered as a paradigm for biodiversity modelling.
New types of patterns have been recently demonstrated in this system which occur for spatially homogeneous diffusion coefficients~\cite{AM12, mimura2015, C15}.
This includes, for example, patchy invasion (the spread of a species via the formation and propagation of chaotic patches without a smooth population front), which was originally believed to occur only in predator-prey or inhibitor-activator types  models~\cite{petrovskii2002}.
Surprisingly enough, the Lotka-Volterra system still remains poorly understood, especially when the interacting species diffuse at different rates.
Here, we show the existence of several new dynamical patterns, related to the spread of travelling waves, which have been missed in the literature so far, and which may have important biological applications.
In particular, we demonstrate spreading patterns exhibiting complex regular spatial structure which have not been observed so far in reaction-diffusion models with a non-transitive competition such as the Lotka-Volterra cyclic model.

Note that another important gap in our knowledge about patterns and waves in reaction-diffusion models is that most existing results have been obtained for either one or two dimensions in space. The simple reason for this is that exploring interactions in three dimensions presents a challenge both from the analytical and the numerical point of view. This is rather a nuisance since many applications of these models, modelling microbiological interactions in water bodies, various medical applications as well as chemical interactions within a substantial volume, are inherently three dimensional.
In particular, an important question is how the dimension of the system's spatial domain influences species persistence and biodiversity~\cite{wilson1995, morozov2007}.

A thorough investigation of the Lotka-Volterra model with complex spatio-temporal patterns in two and three space dimensions is made possible here by applying a novel \emph{adaptive} numerical method, based on a Finite Element Method (FEM) coupled with a reliable \emph{a posteriori} error estimator. Our choice of method is motivated by the observation that solutions to this model (and many others) feature large patches of relative homogeneity in which a single species dominates, separated by narrow travelling fronts where interactions occur. These narrow fronts can only be resolved by a suitably fine computational mesh, although using such a fine mesh across the large domains and long time scales required for the full system dynamics to develop can be prohibitively expensive, a difficulty which is amplified as the number of spatial dimensions increases. Since this resolution is not required in large areas where a single species dominates, the Lotka-Volterra system is a perfect target for mesh adaptivity, tightly focussing the computational effort in the areas of interaction and sparsely deploying it elsewhere.

The algorithm we employ is based on mathematically rigorous error estimates which are computable at each time step because they only depend on `known' quantities such as the numerical solution and problem data.
A posteriori error estimation for both stationary and dynamic linear equations is now relatively well understood~\cite{AO00,V13,MN03}.
This approach is extended here by developing energy norm a posteriori error bounds for semilinear systems of parabolic equations for which the nonlinear reaction terms satisfy suitable growth conditions, ensuring that the framework we present also includes many other similar reaction-diffusion type systems typically encountered as models of biological and ecological phenomena. For this class of equations, discretised by finite element methods, we prove new a posteriori upper bounds of the true error of the numerical method. The analysis is based on the elliptic reconstruction technique of~\cite{MN03} and exploits, in an a posteriori fashion, an argument from~\cite{CGJ13} to bound the nonlinear terms; see also~\cite{CGKM16} for a similar argument applied to a single reaction-diffusion equation whose solution blows up in finite-time.
Based on the a posteriori error estimator, we devise a local error indicator which is used to drive the computational mesh adaptation algorithm used in all of the numerical simulations presented here.
The standard second order finite element method is judiciously combined with the second order Crank-Nicolson Adams-Bashforth implicit-explicit (IMEX) discretisation in time, allowing us to treat the linear part of the differential operator implicitly while the nonlinear reaction term is treated explicitly.
The resulting scheme therefore allows us to efficiently explore the spatio-temporal patterns arising in the Lotka-Volterra cyclic competition model in both two and three spatial dimensions. We demonstrate its effectiveness by revealing  several novel spatial-temporal patterns in the  model in two and three spatial dimensions, and explore their properties.

The paper is organised as follows. Section~\ref{sec:model} introduces the cyclic competition model with diffusion, while Section~\ref{sec:numericalscheme} provides the details of the novel numerical method used in our simulations. Some results from our simulations are presented in Section~\ref{sec:results}, demonstrating the new  dynamical patterns in the the cases of two and three spatial dimensions.
In Section~\ref{sec:apost} we present the proof of a reliable a posteriori error bound for a general class of semilinear parabolic problems, and discuss how it is used to drive mesh adaptivity.
A concluding discussion of the significance of our results from the biological and computational viewpoints is given in Section~\ref{sec:conclusion}.

\section{Cyclic competition in a reaction-diffusion model}
\label{sec:model}

The spatio-temporal interactions of three competing species are described using a reaction-diffusion scheme based on the three-species Lotka-Volterra competition  model, as proposed by May and Leonard~\cite{ML75} and~\cite{mimura2015}. The unscaled system in $\domain \times [0,T]$, with $\domain \subset \Re^{\spacedim}$ where $\spacedim = 2$ or $3$, reads
\begin{align*}
    \ddt{(\uvec[1])} &= D_1 \Delta \uvec[1] + \growth[1] \uvec[1] (1 - \reac{1}{1} \uvec[1] - \reac{1}{2} \uvec[2] - \reac{1}{3} \uvec[3]) \\
    \ddt{(\uvec[2])} &= D_2 \Delta \uvec[2] + \growth[2] \uvec[2] (1 - \reac{2}{1} \uvec[1] - \reac{2}{2} \uvec[2] - \reac{2}{3} \uvec[3]) \\
    \ddt{(\uvec[3])} &= D_3 \Delta \uvec[3] + \growth[3] \uvec[3] (1 - \reac{3}{1} \uvec[1] - \reac{3}{2} \uvec[2] - \reac{3}{3} \uvec[3]),
\end{align*}
where $\uvec[i]\equiv\uvec[i](t,\bold{x})$ is the density of species $i$ at time $t$ and location $\bold{x}$;  $a_i$ is the intrinsic growth rate of species $\uvec[i]$, and each coefficient $\alpha_{i,j}$ represents the limiting effect that the presence of species $\uvec[j]$ has on species $\uvec[i]$ (the term $\alpha_{i,i}$ describes self-limitation of the population). The diffusion coefficients $D_i$ describe the dispersal rate/mobility of each species.

The model is completed by homogeneous Neumann boundary conditions on $\boundary$ and the initial conditions
\begin{align*}
    \uvec[i](\vx, 0) &= \initialcond_i(\vx) \quad \text{ in } \domain,
\end{align*}
for various different choices of functions $\initialcond_i$, $i=1,2,3$.

Since the model contains a large number of parameters, it is convenient to scale each species'
density by introducing $\tilde{\uvec}_i = \alpha_{i,i}\uvec[i]$, and then rescale both time and space to obtain the simplified three-species competition system:
\begin{align}\label{eq:scaledsystem}
    \begin{split}
        \ddt{(\uvec[1])} &= \phantom{\diff[1]} \Delta \uvec[1] + \phantom{\growth[1]} \uvec[1] (1 - \uvec[1] - \reac{1}{2} \uvec[2] - \reac{1}{3} \uvec[3]) \\
        \ddt{(\uvec[2])} &= \diff[2] \Delta \uvec[2] + \growth[2] \uvec[2] (1 - \reac{2}{1} \uvec[1] - \uvec[2] - \reac{2}{3} \uvec[3]) \\
        \ddt{(\uvec[3])} &= \diff[3] \Delta \uvec[3] + \growth[3] \uvec[3] (1 - \reac{3}{1} \uvec[1] - \reac{3}{2} \uvec[2] - \uvec[3]).
    \end{split}
\end{align}
Note that we can assume without loss of generality that $\epsilon_2,\epsilon_3 \le 1$ since we can always choose to scale the system to the largest diffusion coefficient, and consider the corresponding species as the first one.

The Lotka-Volterra competition  model with diffusion has been studied in a number of papers where it was shown that it can possess complex patterns of dynamics~\cite{K82, MF86, Petrovskii2001, mimura2015, MuN11, AM12, Ch12, C15}.
The outcome of interactions between the species is strongly affected by the hierarchy of the competition structure, as dictated by examining each pairwise interaction.
One of the most interesting scenarios includes a cyclic competition structure, in which (roughly speaking) species 1 outcompetes species 2, species 2 outcompetes species 3 and species 3 outcompetes species 1.
Such cyclic dominance is analogous to the popular game of `rock-paper-scissors'.
Some well-known examples of cyclic interactions observed in nature include competition between side-blotched lizards~\cite{sinervo1996}, coral reef invertebrates~\cite{Buss79}, yeast strains~\cite{paquin1983}, and various bacterial strains~\cite{kirkup04}.
The same model also arises in non-biological situations such as many-player prisoner's dilemma games~\cite{HDHS02} or some types of voter models~\cite{tainaka1993}.

Formalising the above characterisation of cyclic competition can be tricky, although here we will follow the definition given by~\cite{AM12}.
This is based on considering the outcomes of pairwise interactions in an unbounded one-dimensional spatial domain (i.e. in the absence of a third species), starting from initial conditions such that the species densities at positive and negative infinity are equal to the carrying capacities for one species and zero for the other.
Cyclic competition is then said to occur if the direction of the resulting travelling waves preserves the cyclic order $1 > 2 > 3 > 1$.
For example, the domain occupied by species 2 at its carrying capacity level should eventually be replaced by a spreading wave of species 1.
This generic definition of cyclic dominance allows two main types of local dynamics~\cite{AM12, C15}.
In \emph{classical cyclic competition}, the phase portrait of each pairwise interaction should involve only one stable steady state corresponding to the presence of the stronger competitor at its carrying capacity.
In this case, adding a spatial dimension to the local interaction does not reverse the outcome of the competition since the corresponding travelling wave will be directed from the domain occupied by the stronger competitor to that of the weaker competitor~\cite{Hosono1998}.
The mathematical conditions for this to occur are: $\alpha_{i,i+1} \le 1 $ and $\alpha_{i+1,i} > 1$.
Under \emph{conditional cyclic competition}, on the other hand, some local pairwise interactions can be bistable: both of the axial steady states (corresponding to the carrying capacities of one species and zero density for the other) are locally stable and the final outcome of the local competition will depend on the initial conditions.
Mathematically, assuming bistability occurs for interactions between species 1 and 3 this means that $\alpha_{1,3} > 1$ and $\alpha_{3,1} > 1$. Adding a spatial dimension into the model with conditional cyclic competition should preserve the displacement order $1 > 2 > 3 > 1$ as in the classical cyclic competition.
However, the conditional cyclic competition involves some constraints on the diffusion coefficients~\cite{ADD10}: using arbitrary diffusion coefficients will not guarantee the cyclic dominance.
In this paper, we will explore patterns corresponding to both the classical and the conditional cyclic competition scenarios.

Interestingly, for the model of cyclic competition without space, the long term coexistence of all species is not possible~\cite{ML75}, while coexistence can occur when space is considered~\cite{AM12, C15}. Furthermore, the dimensionality of the spatial domain plays a role in the long term persistence of all species~\cite{AM12}, and various scenarios of coexistence in the model~\eqref{eq:scaledsystem} have been reported in the case of two spatial dimensions. Along with well-known patterns of dynamics such as spiral waves and target waves and their combinations, new regimes have been recently demonstrated including the spread of irregular patches, zip-shaped waves and wedge-shaped traveling fronts~\cite{AM12, C15}. It has also been observed that the species mobilities $\epsilon_{2},\epsilon_{3} \le 1$ can play a key role in ensuring species persistence and provide the system with a rich variety of dynamical regimes~\cite{AM12}. In this work we will demonstrate novel patterns of dynamics for the cyclic Lotka-Volterra reaction-diffusion model with non-equal diffusion coefficients, which have not been reported before.

Finally, the solutions of system~\eqref{eq:scaledsystem} largely depend upon the habitat and initial conditions. In all cases, we consider the spatial domain $\Omega$ to be a $d$-dimensional cube, $d=2,3$, and we explore several different types of initial conditions which imitate different scenarios of biological invasion. In the two-dimensional case, we use the following conditions:
(i) from a point near the centre of the habitat, we extend three boundary lines at angle $2\pi/3$ from each other, thus dividing the habitat into three `triangular' subdomains, and we start with one and only one species present in each of these subdomains at its carrying capacity density (further we call this condition the `triangular' condition for brevity);
(ii) in the habitat initially occupied by a single species (with the density corresponding to the carrying capacity) we introduce the other two species locally in a circular subdomain such that the centres of the two circles do not coincide.
In the three-dimensional case, we explore using initial conditions formed by dividing the whole domain into 8 equal boxes and consider that the boxes are initially occupied by only one species at the density corresponding to the carrying capacity.
In all cases, the initial conditions are smoothed by replacing the zero density/carrying capacity interfaces with a thin layer of smooth transition.

\section{Numerical scheme}\label{sec:numericalscheme}

The numerical scheme we use to approximate solutions to the model~\eqref{eq:scaledsystem} is based on a second order $C^0$-conforming finite element method in space, coupled with a second order Crank-Nicolson Adams-Bashforth implicit-explicit (IMEX) time discretisation.
The spatial mesh is then automatically locally coarsened and refined on certain time steps in response to an estimate of the local contribution to the discretisation error, evaluated using the \emph{error indicator} derived in Section~\ref{sec:apost}.
Without adaptive meshing, the finite element approximation would have to be computed using a uniform mesh which is sufficiently fine everywhere to resolve the (moving) small features appearing in the simulations.
Since solutions of this problem contain a large range of length scales, from the small features of the chaotic interaction regions to the large areas where a single species dominates, the use of a uniform mesh sufficiently fine to resolve all solution features would be prohibitively expensive in terms of memory consumption and computation time, even on high performance hardware.
This issue is exacerbated in three spatial dimensions, and the problem quickly becomes completely intractable with a fine fixed mesh.
Instead, the method can be made significantly more efficient by adaptively refining the mesh where the estimated error is high, adding extra resolution where it is needed, and coarsening where the estimated error is low and less resolution is acceptable.
A simple adaptive algorithm which achieves this is outlined in Algorithm~\ref{alg:FEM}.

We begin by re-writing the model~\eqref{eq:scaledsystem} in the more general matrix form
\begin{align*}
    \ddt{\uvec} - \Delta (\diff \cdot \uvec) &= \f(\uvec)\phantom{0\initialcond(\vx)} \text{ in } \domain \\
    \ddn{\uvec[i]} &= 0\phantom{\initialcond(\vx)\f(\uvec)} \text{ on } \boundary \quad \forall i = 1,2,\dots,n\\
    \uvec(\vx, 0) &= \initialcond(\vx)\phantom{0\f(\uvec)} \text{ in } \domain,
\end{align*}
where $\cdot$ denotes the Euclidean product between vectors.
The nonlinear interaction term $\f : \Re^{m} \to \Re^{m}$ is assumed to satisfy the general Lipschitz-style growth condition that there exists $0 \leq \gamma < 2$ when $\spacedim=2$, or $0 \leq \gamma \le  \frac{4}{3}$ when $\spacedim = 3$, such that for any $\vvec, \wvec \in \Re^{m}$ we have
\begin{align}\label{eq:growthBound}
    \abs{\f(\vvec) - \f(\wvec)} \leq \Cf (1 + \abs{\vvec}^{\gamma} + \abs{\wvec}^{\gamma}) \abs{\vvec - \wvec},
\end{align}
with $\abs{\cdot}$ denoting the Euclidean norm on $\Re^{m}$.

The interaction term of the model~\eqref{eq:scaledsystem} may be seen to satisfy this condition with $\gamma = 1$ by writing it as
\begin{align*}
	\f(\uvec) = (\diag{\growth} \uvec) \elprod (\vec{1} - A \uvec),
\end{align*}
where $\vec{1} \in \Re^{m}$ denotes the vector with all entries equal to one, the matrix $A \in \Re^{m \times m}$ contains the interaction parameters, with $A_{i,j} = \reac{i}{j}$, and $\elprod$ denotes the Hadamard product between tensors, and for $r \in \Re^{m}$ we use $\diag{r}$ to denote the $m \times m$ matrix with the elements of the vector $r$ on its main diagonal.

We shall use the notation $H^1(\domain)$ to denote the Hilbertian Sobolev space of functions with square-integrable first derivatives, viz.
\begin{align*}
H^1(\domain) = \{ v \in L^2(\domain) : \nabla v \in (L^2(\domain))^m \},
\end{align*}
and we refer to, e.g., \cite{AF03,W87} for more on Sobolev and Bochner functions spaces.
Multiplying by a suitable test function and integrating by parts, the problem may be written in weak form as: find $\uvec \in H^1(0,T; H^1(\domain)^{m})$ such that
\begin{align}\label{eq:weakForm}
    \left( \ddt{\uvec}, \vvec \right) + (\diff \jacobian{\uvec}, \jacobian{\vvec}) - (\f(\uvec), \vvec) = 0 \qquad \forall \vvec \in H^1(\domain)^{m}.
\end{align}
Here, $\jacobian{\uvec}$ denotes the Jacobian matrix of $\uvec$ and the $L^2$ inner product between matrix-valued functions $A(\vx)$ and $B(\vx)$ is defined to be $(A, B) := \int_{\domain} A : B \dx$, where $A:B$ denotes the Frobenius product.

The spatially discrete finite element method for solving this problem can then be written as: find $\uhvec \in H^1(0,T; (\Vh)^m)$ such that
\begin{align}\label{eq:finiteElementForm}
    \left( \uhtvec, \vectestfn \right) + (\diff \jacobian{\uhvec}, \jacobian{\vectestfn}) - (\f(\uhvec), \vectestfn) = 0 \qquad \forall \vectestfn \in (\Vh)^m.
\end{align}
Here, $\Vh$ is a conforming finite element space, that is a finite dimensional space of functions which are continuous over $\domain$ and piecewise polynomial of degree $\k$ with respect to a mesh $\Th$ covering $\domain$, see e.g. the monograph~\cite{C}. In particular, we select piecewise quadratic finite elements for our computations.

To produce a practical method, the time derivative in the spatially discrete problem presented above must also be discretised. Also, the finite element space on each timestep will be different, in general, although we refrain for expressing this dependence explicitly in the notation for accessibility.
Adopting the second order Crank--Nicolson Adams--Bashforth IMEX time discretisation allows us to treat the linear part of the differential operator implicitly while the nonlinear reaction term is treated explicitly, producing the fully discrete problem: for $1 \leq n \leq N-1$, find $\uvec^{n+1} \in (\Vh)^{m}$ such that
\begin{align*}
    \left(\frac{\uvec^{\timestep + 1} - \uvec^{\timestep}}{\tau}, \vectestfn \right) +  \left( \frac{\diff}{2} \left( \jacobian \uvec^{\timestep + 1} + \jacobian \uvec^{\timestep} \right), \jacobian \vectestfn \right) = \left( \frac{3}{2} \f(\uvec^{\timestep}) - \frac{1}{2} \f(\uvec^{\timestep-1}), \vectestfn \right) \qquad \forall \vectestfn \in (\Vh)^{m},
\end{align*}
where the function $\uvec^{n}$ is the finite element approximation to the solution $\uvec$ at time moment $t^n$.
This choice of timestepping scheme avoids the need to solve a system of nonlinear equations at each time step while still providing second-order accuracy in time. Moreover, combined with our choice of quadratic finite elements, this choice has proven to be a good compromise between accuracy and computational cost when combined with the adaptive scheme described in Algorithm~\ref{alg:FEM}.

\begin{algorithm}
  \caption{A typical adaptive finite element method}
  \label{alg:FEM}
  \begin{algorithmic}[1]
    \Require Domain, initial condition, initial mesh
    \State Interpolate initial condition
    \State Assemble system matrices on initial mesh
    \For{each time step}
    	\State Compute nonlinear forcing term using previous solution
		\State Solve for current time step
		\If{mesh should be adapted on this step}
    		\State Estimate error
    		\State Refine/coarsen mesh
    		\State Transfer solution to new mesh
			\State Recompute system matrices on new mesh
		\EndIf
    \EndFor
    \Ensure Discrete solution at each time step
  \end{algorithmic}
\end{algorithm}

The above numerical method was implemented using the \texttt{deal.II} C++ finite element library~\cite{BHK07}.
An essential feature of this library is its support for high order spatial discretisations with adaptive mesh refinement  in both two and three spatial dimensions.

\section{New patterns of spread in the cyclic competition model}\label{sec:results}

We start our investigation of~\eqref{eq:scaledsystem} with the case $d=2$. Our goal is to report some previously unknown transient regimes in which the area where all three species coexist spreads into areas where only one of the species is present. We first consider the classical cyclic competition scenario with model parameters given by $\alpha_{1,1} = 1$; $\alpha_{1,2} = 1$; $\alpha_{1,3} = 2$; $\alpha_{2,1} = 2$; $\alpha_{2,2} = 1$; $\alpha_{2,3} = 1$; $\alpha_{3,1} = 1$; $\alpha_{3,2} = 2$; $\alpha_{3,3} = 1$.
Our numerical simulations revealed new dynamical patterns with regular spatial structures in the travelling population waves. The geometrical shape of the regular structures in the wake of the front largely depends on the diffusion coefficients.

\begin{figure}
    \subcaptionbox{$t = 50$}[.5\textwidth]{%
        \includegraphics[width=0.49\textwidth]{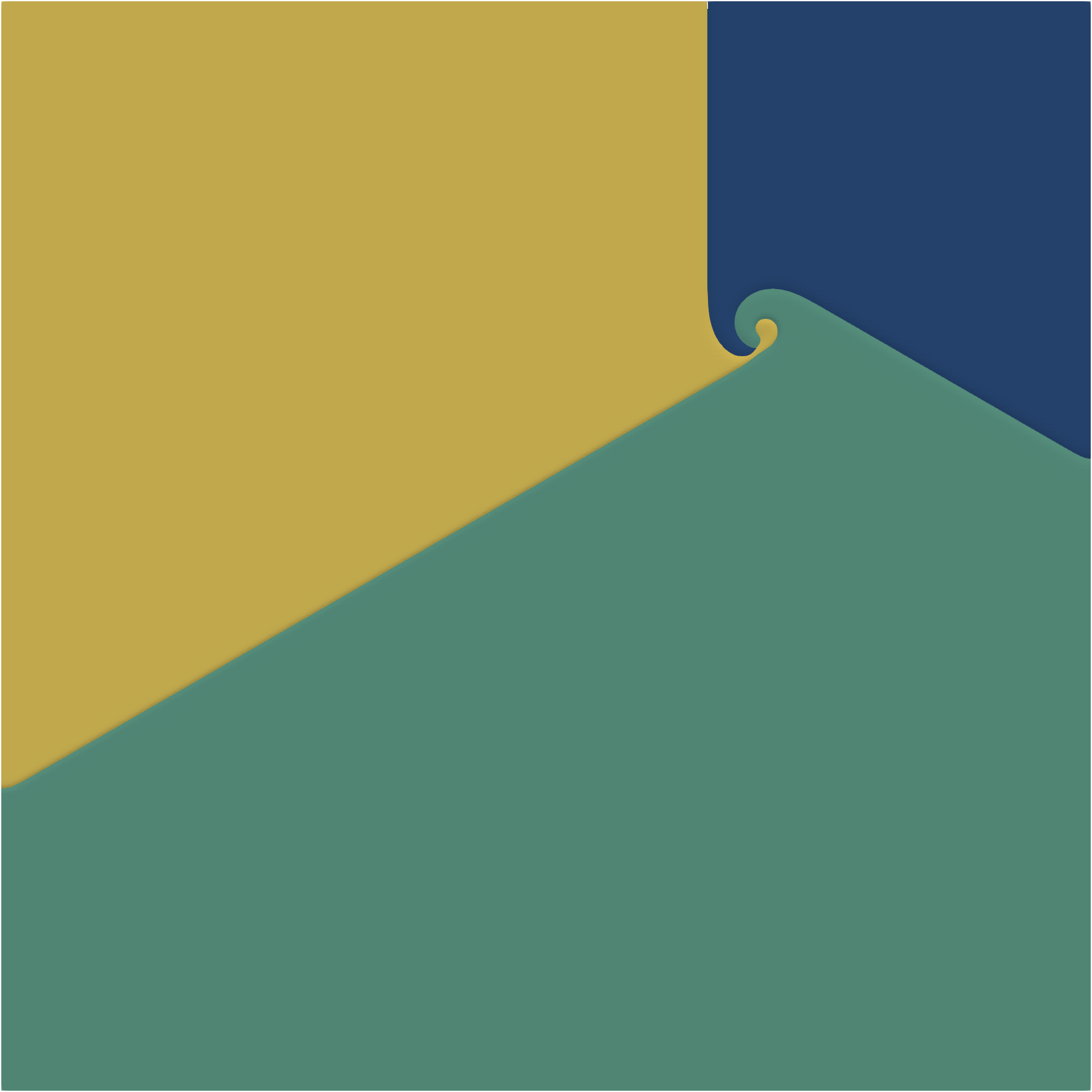}
    }\hfill%
    \subcaptionbox{$t = 150$}[.5\textwidth]{%
        \includegraphics[width=0.49\textwidth]{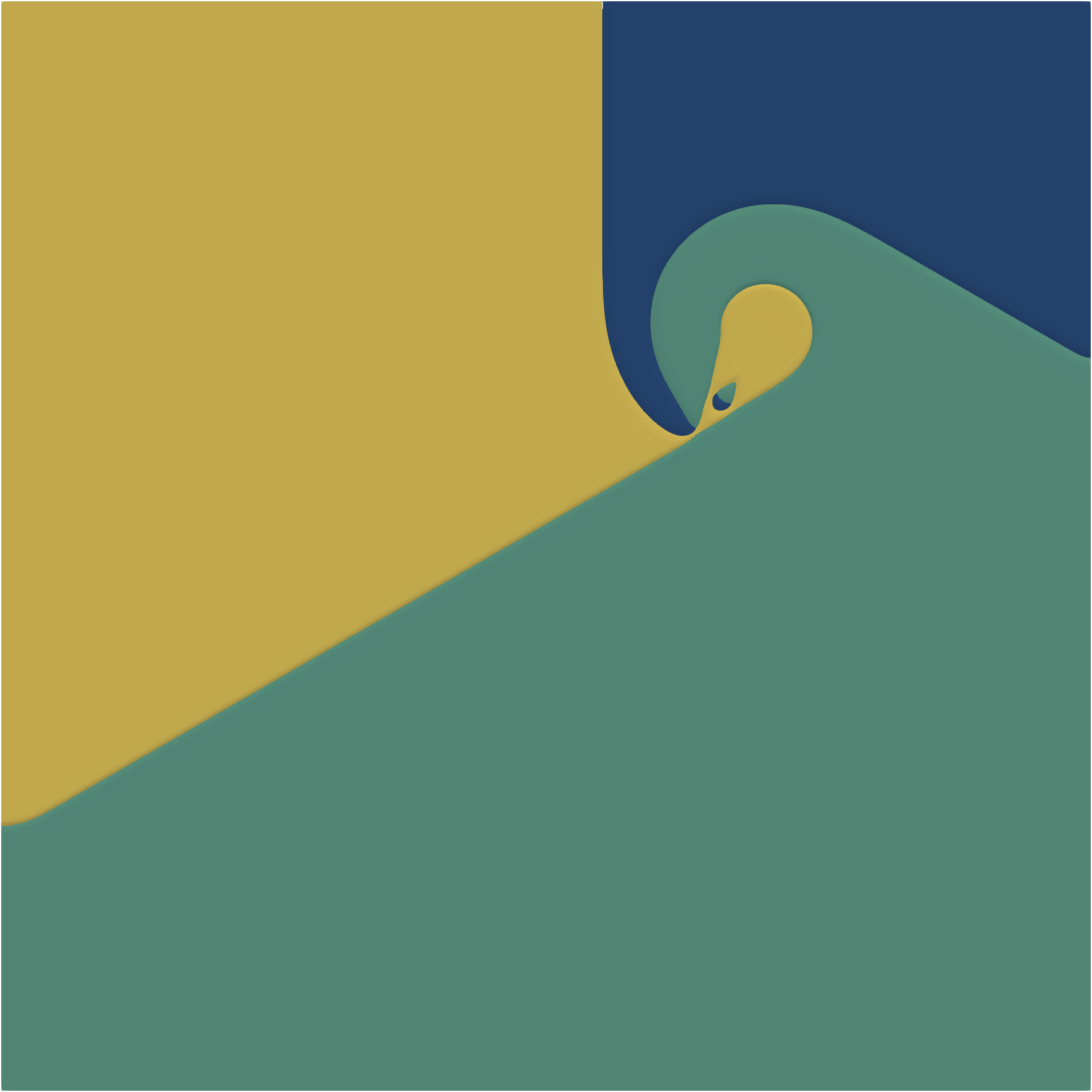}
    }
    \\[1em]
    \subcaptionbox{$t = 360$}[.5\textwidth]{%
        \includegraphics[width=0.49\textwidth]{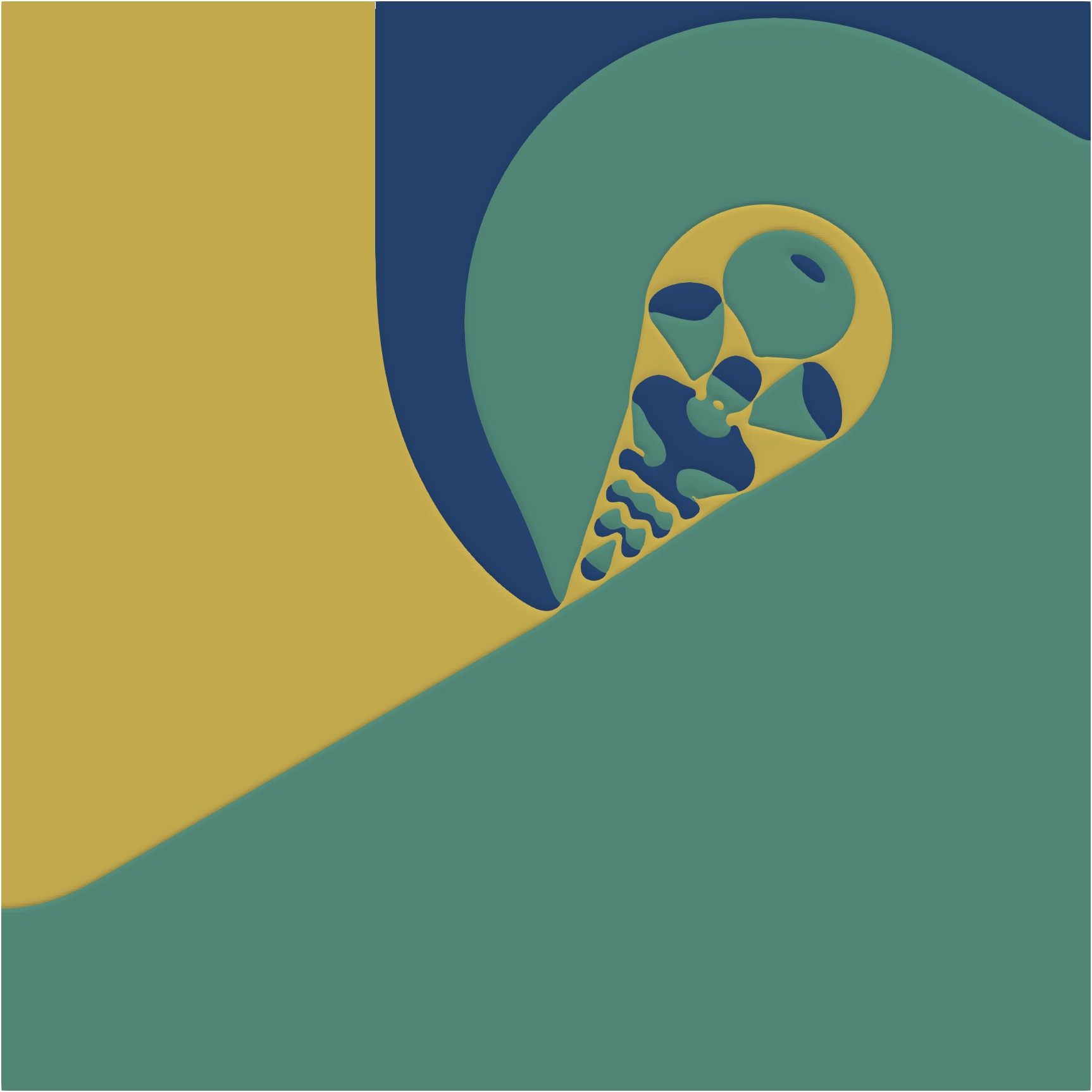}
    }\hfill%
    \subcaptionbox{$t = 900$}[.5\textwidth]{%
        \includegraphics[width=0.49\textwidth]{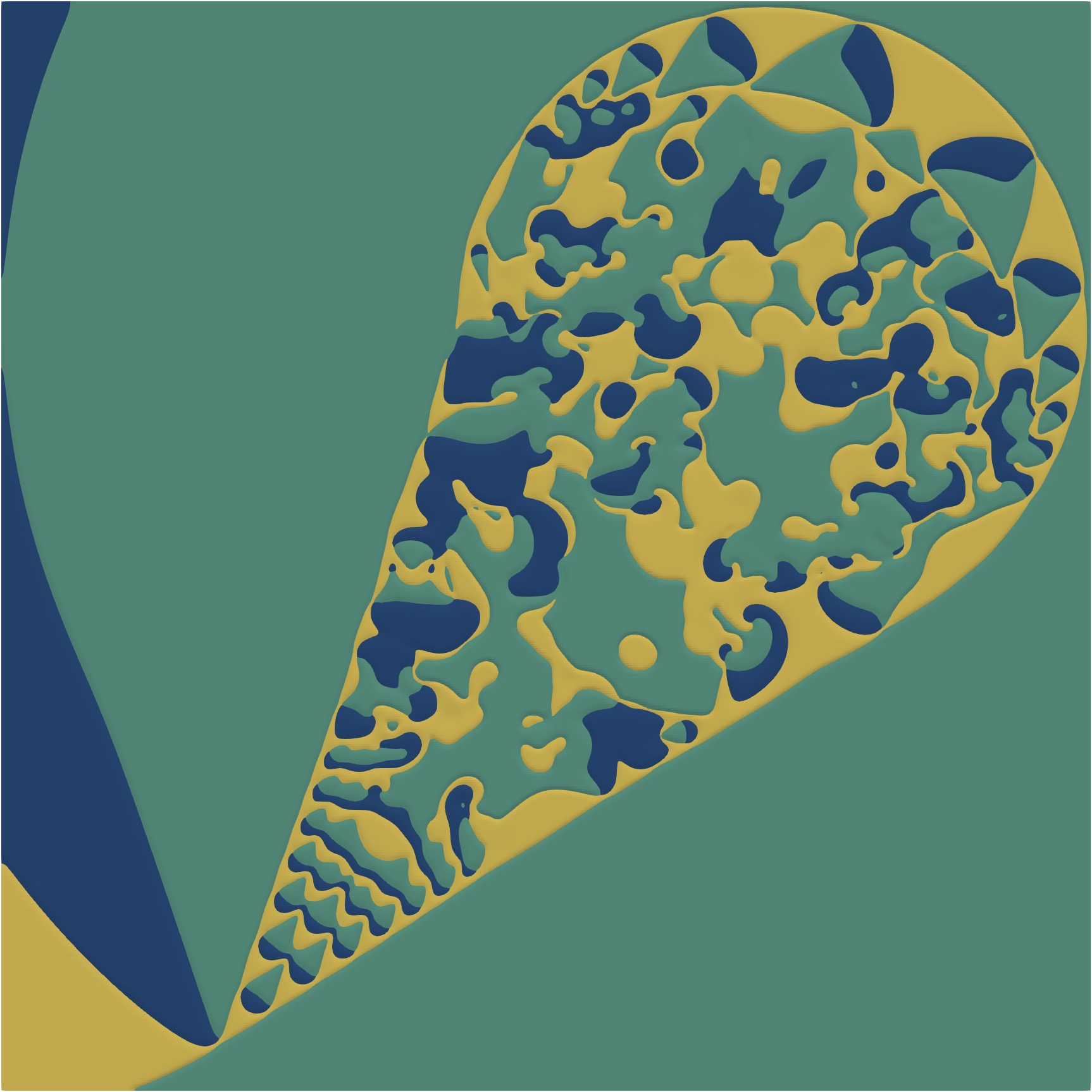}
    }
    \caption{The evolution of the `triangular droplet-like' patterns which are observed when $\epsilon_{2}=0.1$ and $\epsilon_{3}=0.6$ using the `triangular' initial conditions.}
    \label{fig:cycliccompetition:icecream}
\end{figure}

\begin{figure}
    \subcaptionbox{$t = 0$\label{fig:patchInitialConditions}}[.5\textwidth]{%
        \includegraphics[width=0.4\textwidth]{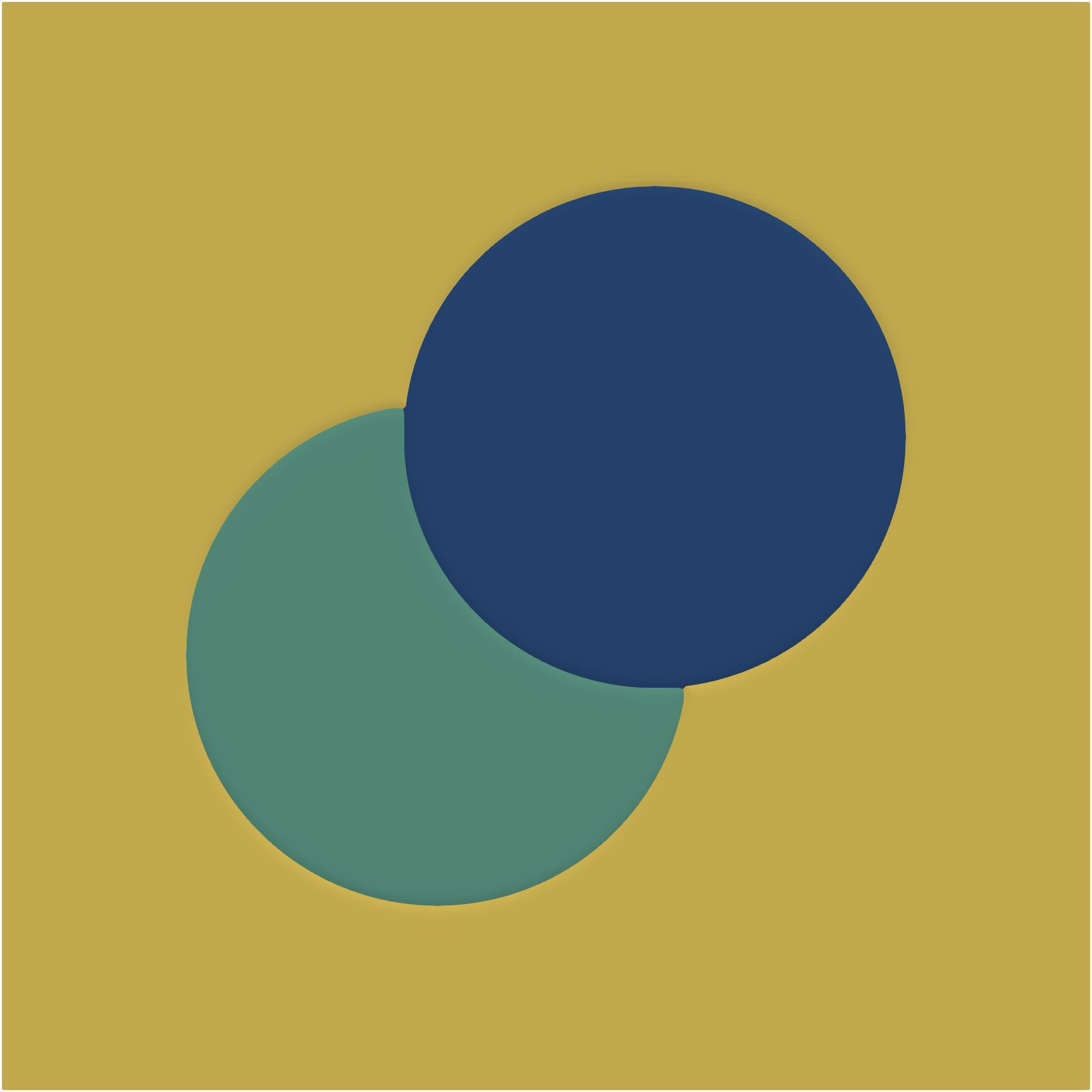}
    }\hfill%
    \subcaptionbox{$t = 40$}[.5\textwidth]{%
        \includegraphics[width=0.4\textwidth]{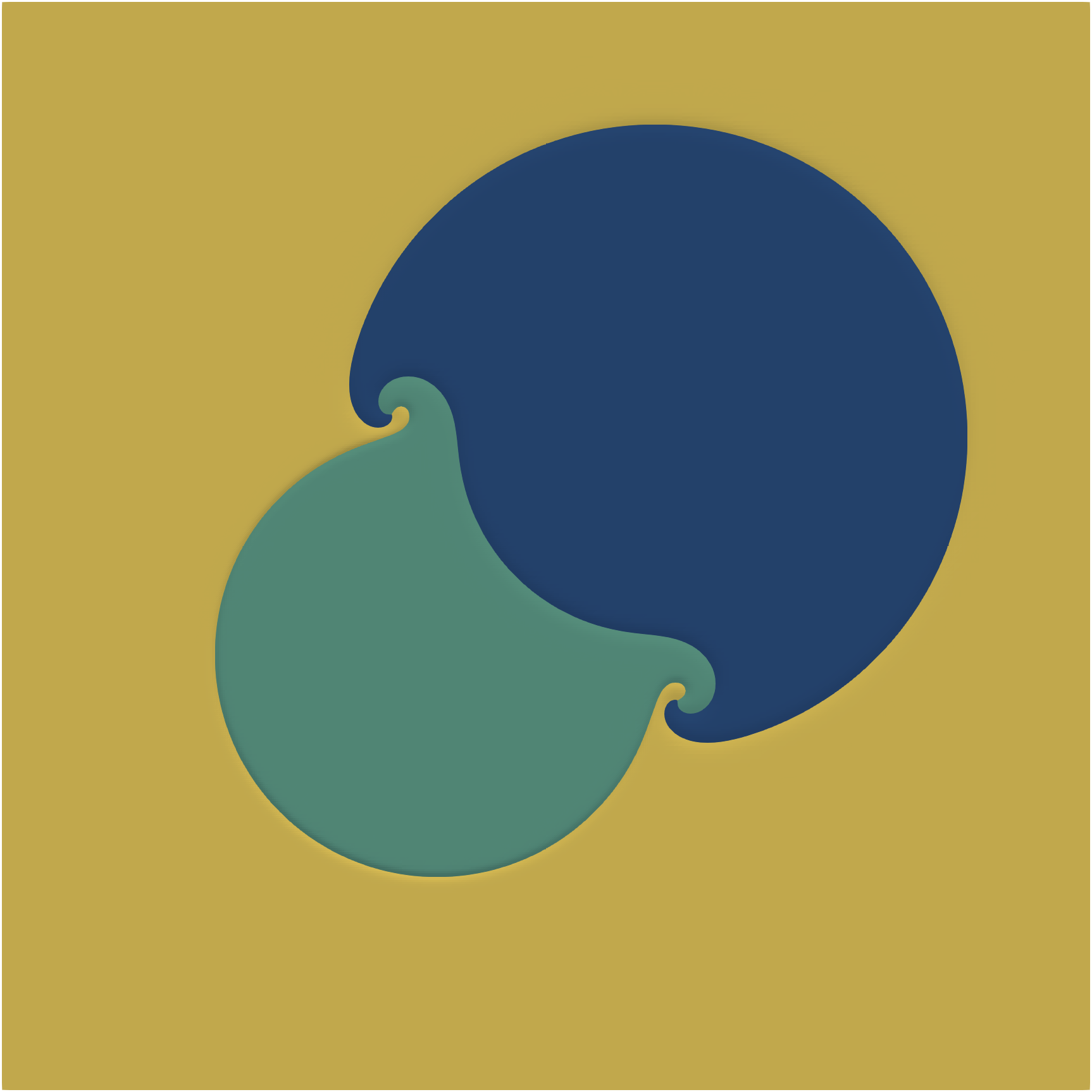}
    }
    \\[0em]
    \subcaptionbox{$t = 120$}[.5\textwidth]{%
        \includegraphics[width=0.4\textwidth]{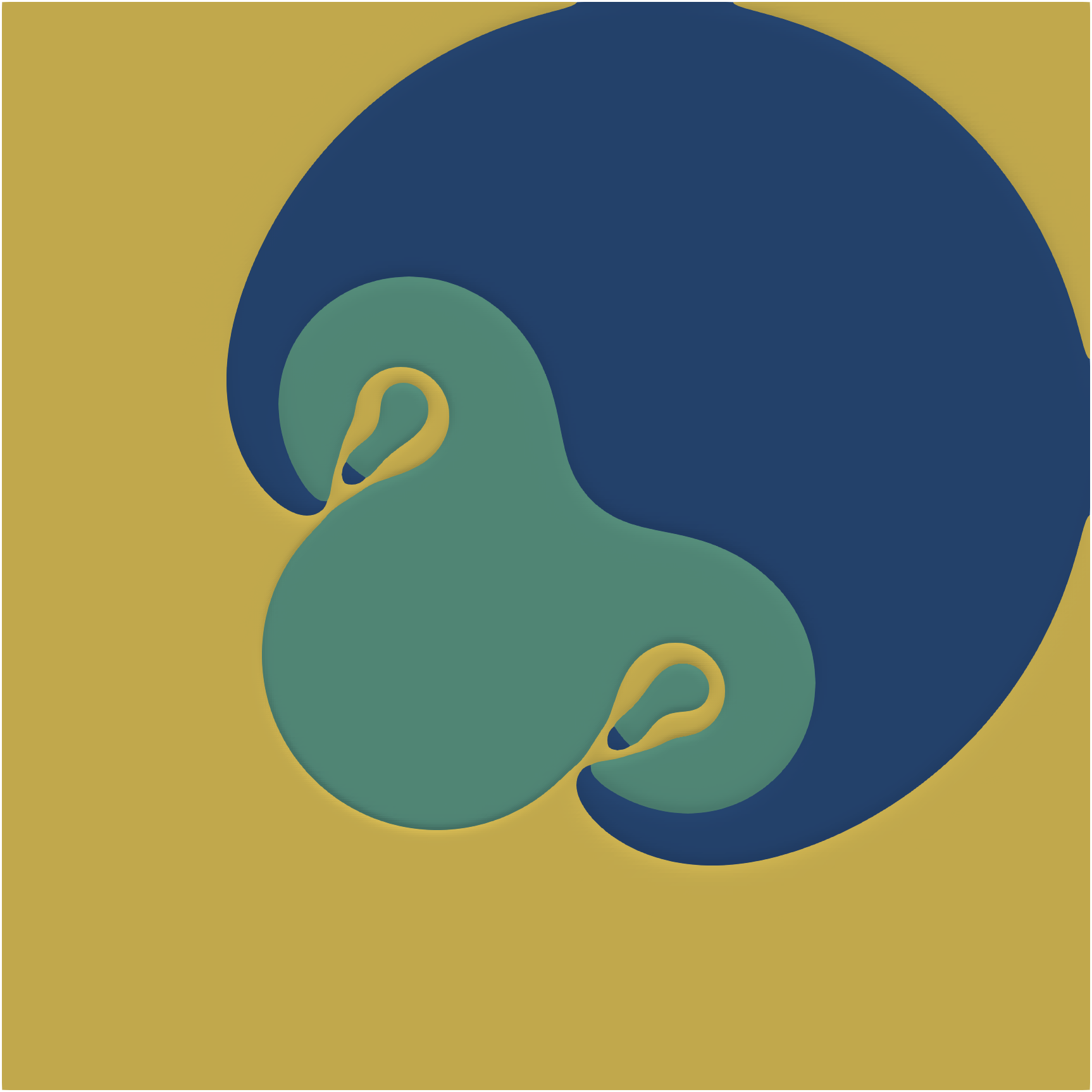}
    }\hfill%
    \subcaptionbox{$t = 240$\label{fig:circlepatches_240}}[.5\textwidth]{%
        \includegraphics[width=0.4\textwidth]{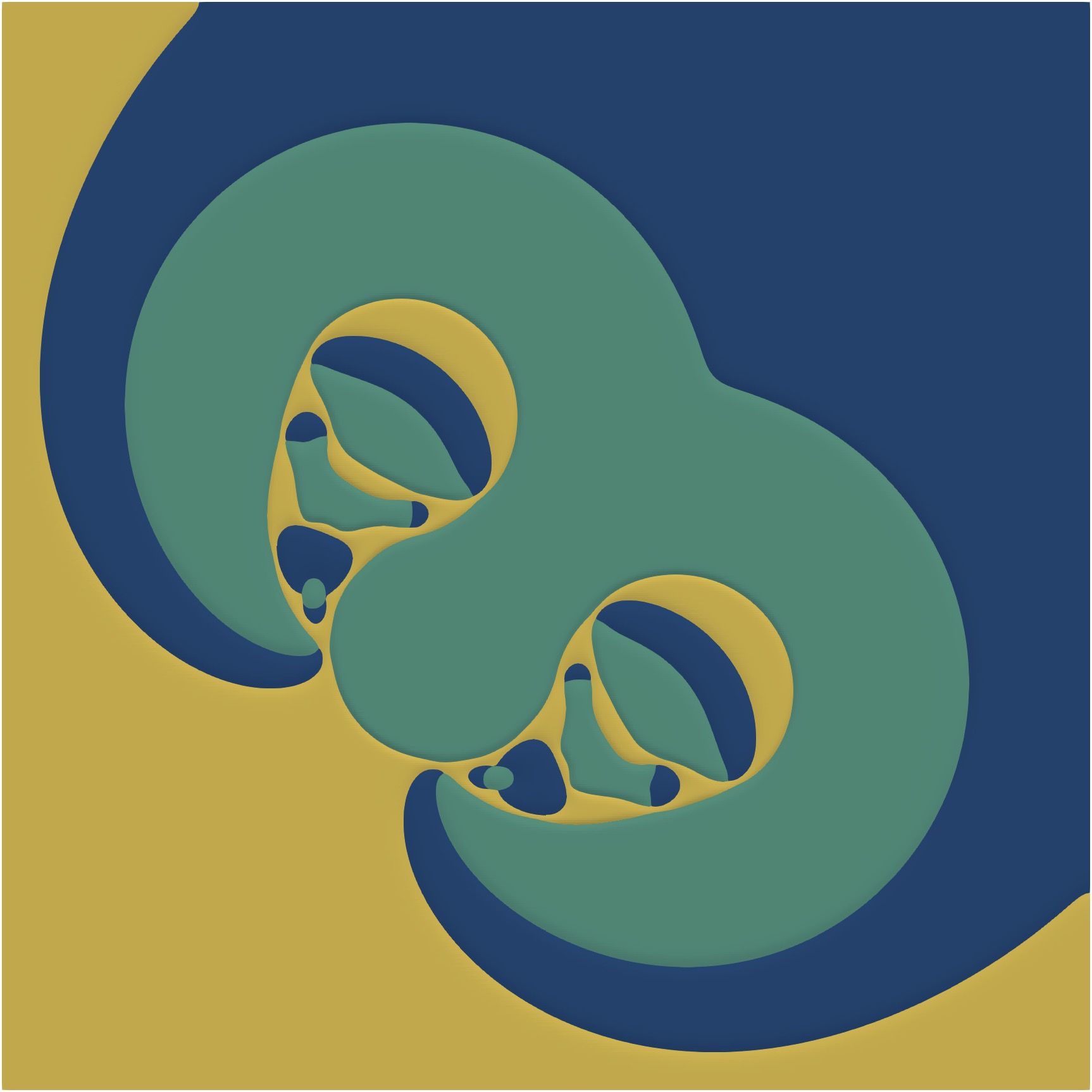}
    }
    \\[0em]
    \subcaptionbox{$t = 400$}[.5\textwidth]{%
        \includegraphics[width=0.4\textwidth]{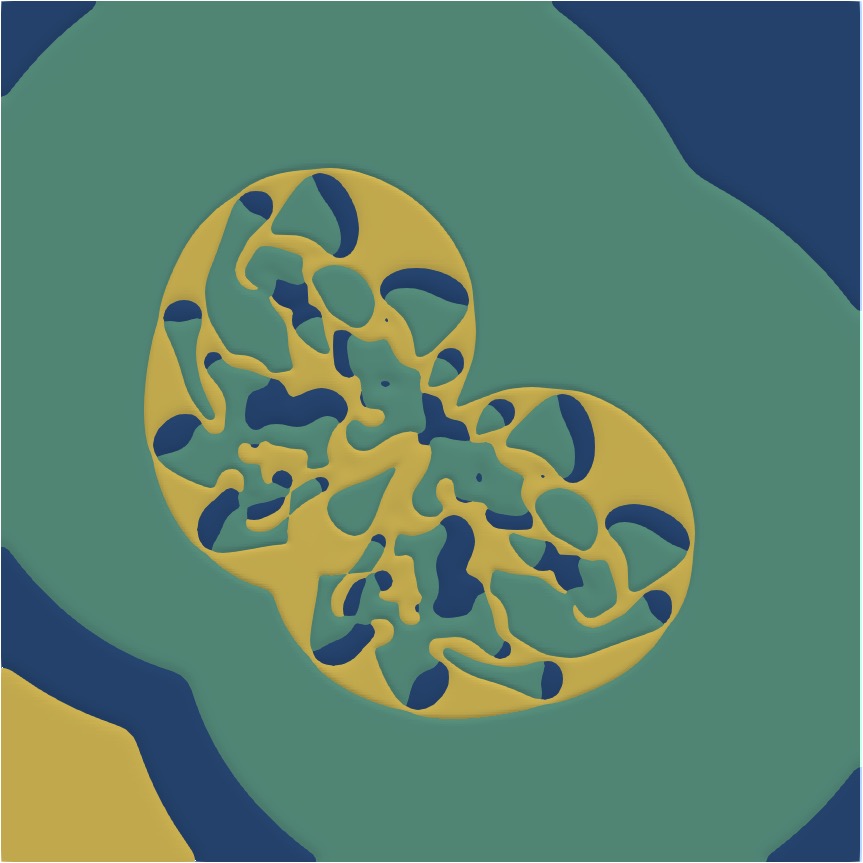}
    }\hfill%
    \subcaptionbox{$t = 600$}[.5\textwidth]{%
        \includegraphics[width=0.4\textwidth]{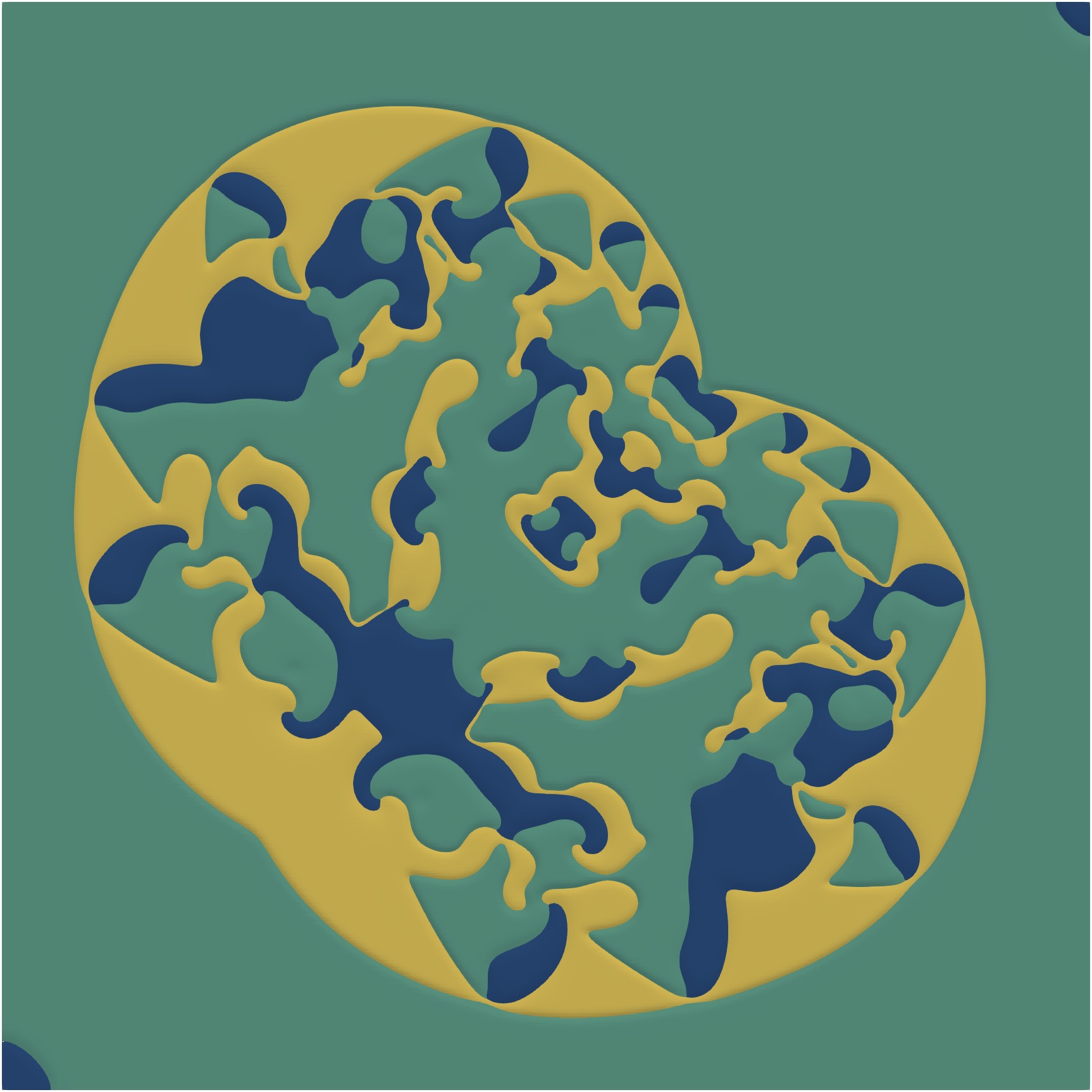}
    }
    \caption{The evolution of the `triangular droplet-like' patterns with $\epsilon_{2}=0.1$ and $\epsilon_{3}=0.6$ using the initial conditions shown in Fig.~\ref{fig:patchInitialConditions}.}
    \label{fig:cycliccompetition:icecream2}
\end{figure}

For small values of the diffusion coefficient of the second species, the resultant patterns have a triangular droplet-like shape which is shown in Fig.~\ref{fig:cycliccompetition:icecream}.
This corresponds to the diffusion coefficients $\epsilon_{2}=0.1$ and $\epsilon_{3}=0.6$. In this and all subsequent figures we use the following colour coding: dark blue indicates regions dominated by species $\uvec[1]$, light yellow indicates regions dominated by species $\uvec[2]$, mid green indicates regions dominated by species $\uvec[3]$.

We use the symmetrical triangular initial conditions described in Section~\ref{sec:model}. During the initial stages of the simulation, a spiral tip is formed. This subsequently transforms into a symmetrical droplet-like domain with a sharp wedge and a complex structure inside. The sharp wedge moves at a constant speed towards the low left corner of the domain. At the tip of the wedge, all three species are present. A regular structure emerges at the front of the wedge, and both move with the same speed. This regular structure is formed of small droplet-like units which are periodic in the direction orthogonal to the bisector of the wedge.
The number of small droplets in the wedge is constantly growing through time. However, moving  progressively from the wedge to the centre of the spreading domain in which the species coexist, the spatial structure becomes more irregular, thus the domain of regularity in the species spread is transient. At much larger times, after the spreading wedge as well as the other spreading boundaries eventually hit the boundary of the habitat $\Omega$, the dynamics of the patches eventually becomes chaotic both in space and time (we do not show this pattern for the sake of simplicity).

We explored the robustness of the pattern shown in Fig.~\ref{fig:cycliccompetition:icecream} with respect to the choice of initial conditions.
We found that similar patterns can be observed for the various initial conditions mentioned in Section~\ref{sec:model} and, in particular, when the initial angles of the sectors of species distributions are different and when the species are initially located in six sectors instead of three (results not shown).
In other cases, such as when the three species are initially contained within squares or disks, we found that spreading wedges generating regular droplet-like structures are once again produced.
However, with highly symmetric initial arrangements, these spreading regions of regularity can eventually annihilate one another by colliding. Fig.~\ref{fig:cycliccompetition:icecream2} shows an example of the evolution of the droplet-like structures for the initial conditions given by circular subdomains, cf. Fig.~\ref{fig:patchInitialConditions}.

Next, we investigated the dependence of patterns of spreading waves on the diffusion coefficients $\epsilon_{2}$ and $\epsilon_{3}$, using the same initial conditions as for Fig.~\ref{fig:cycliccompetition:icecream}.
The results of this investigation are summarised in Fig.~\ref{fig:parameterSearch}.
Decreasing the diffusion coefficient of the second competitor to $\epsilon_{2}=0.05$ appears to prevent the formation of regular droplets in the wedge of invasion (see Fig.~\ref{fig:e2=0.05_e3=0.55}).
The spreading sharp wedge of invasion, however, continues to persist in this case.
As $\epsilon_2$ is increased, however, we find that the regular pattern in the wake of the front persists until $\epsilon_{2}=0.35$; for higher $\epsilon_{2}$, the spatial structure in the spreading wedge becomes irregular (see Fig.~\ref{fig:e2=0.35_e3=0.55}). When the mobility of the third competitor is low (e.g. for $\epsilon_{3}=0.35$), only a single central droplet structure emerges in the spreading wedge (see Fig.~\ref{fig:e2=0.1_e3=0.35}). Finally, with increased $\epsilon_{3}$, the angle of the spreading wedge abruptly increases and a new pattern of dynamics appears (see the next paragraph). This happens for $\epsilon_{3}$ close to one. Finally, for all diffusion coefficients very close to one, the pattern of spread consists of spiral waves (see Fig.~\ref{fig:e2=1_e3=1}) which are well known for this system.

\begin{figure}
    \subcaptionbox{$\epsilon_2 = 0.05$, $\epsilon_3 = 0.55$\label{fig:e2=0.05_e3=0.55}}[.5\textwidth]{%
		\includegraphics[width=.49\textwidth]{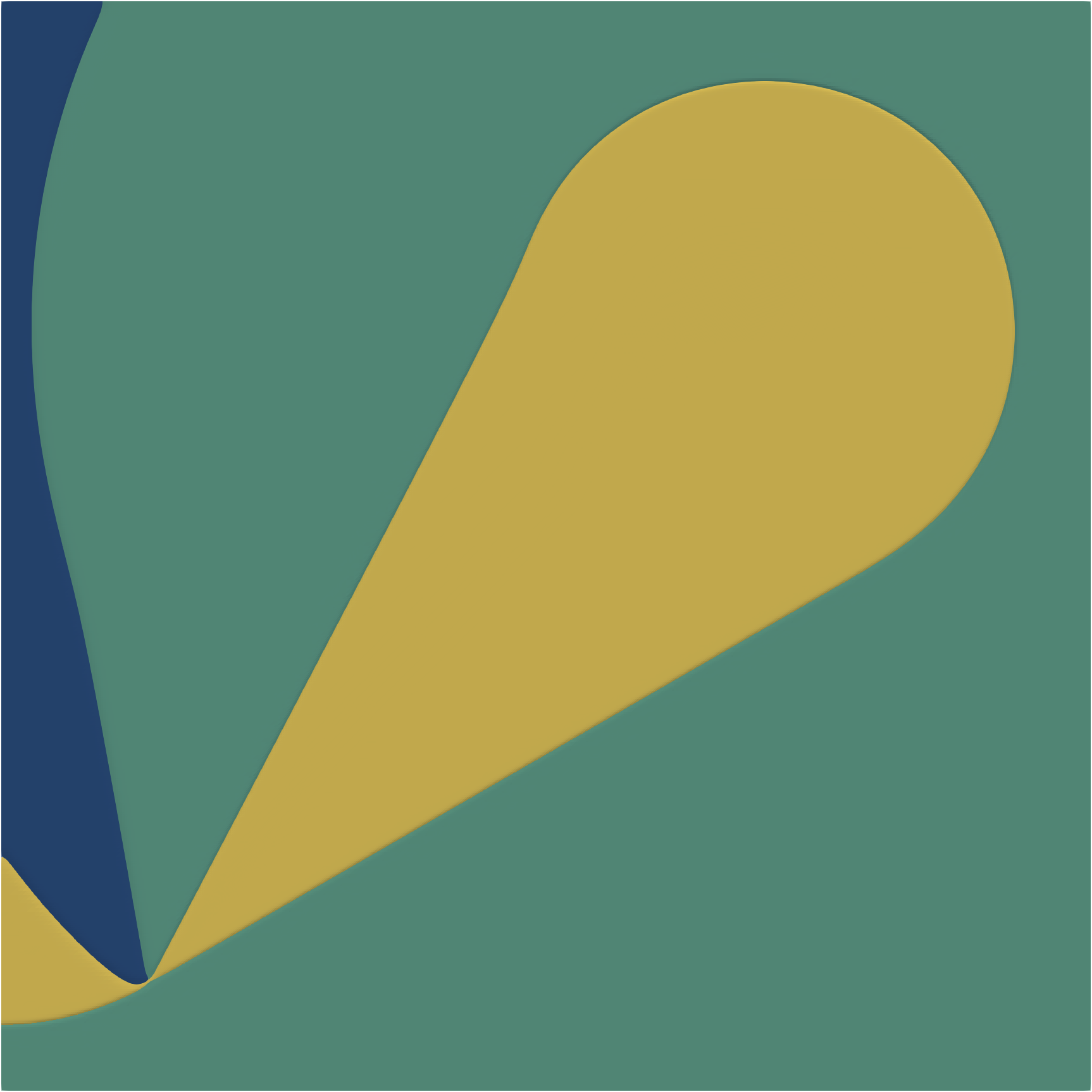}
    }%
    \subcaptionbox{$\epsilon_2 = 0.35$, $\epsilon_3 = 0.55$ (shown at $t=560$)\label{fig:e2=0.35_e3=0.55}}[.5\textwidth]{%
		\includegraphics[width=.49\textwidth]{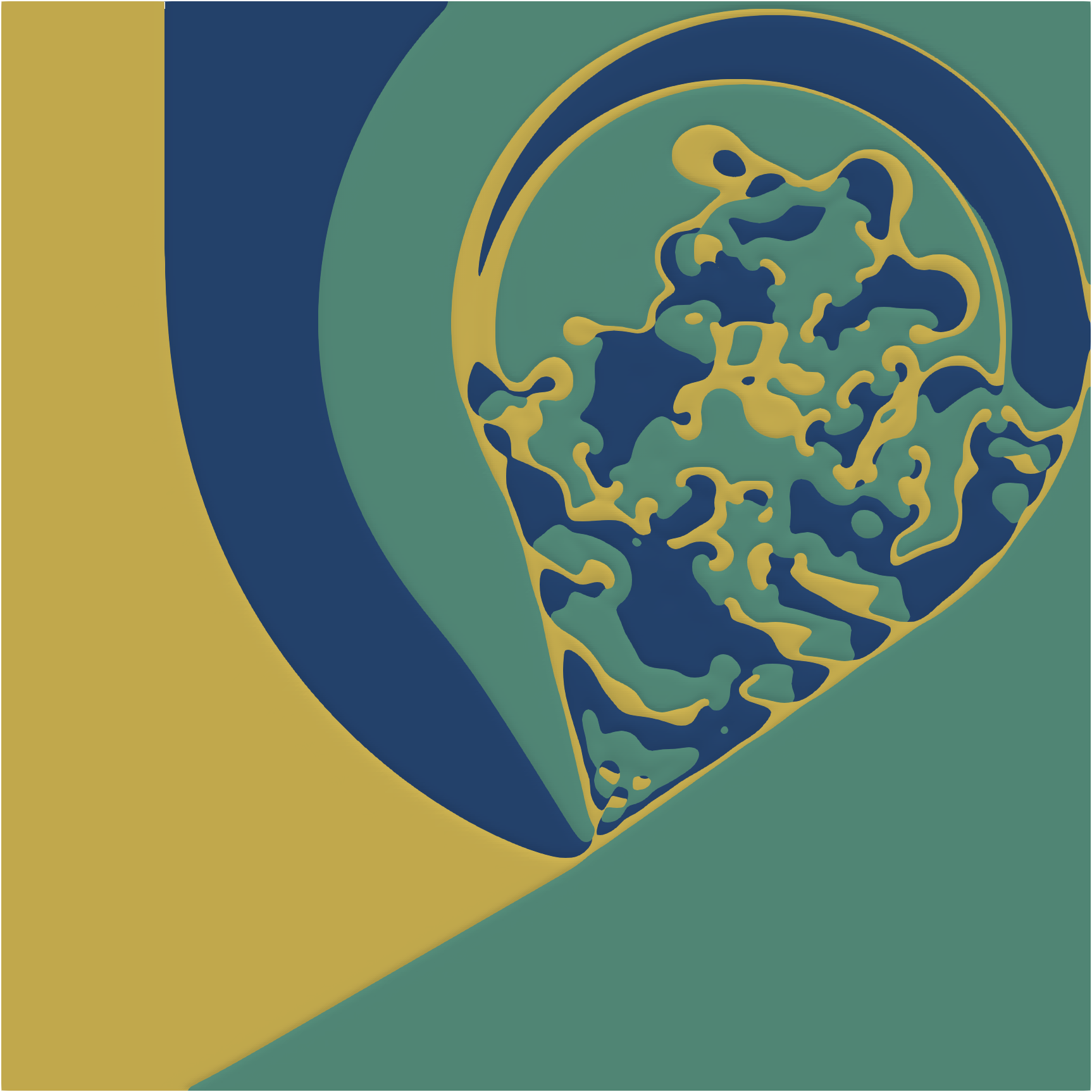}
    }
    \\[1em]
    \subcaptionbox{$\epsilon_2 = 0.1$, $\epsilon_3 = 0.35$\label{fig:e2=0.1_e3=0.35}}[.5\textwidth]{%
		\includegraphics[width=.49\textwidth]{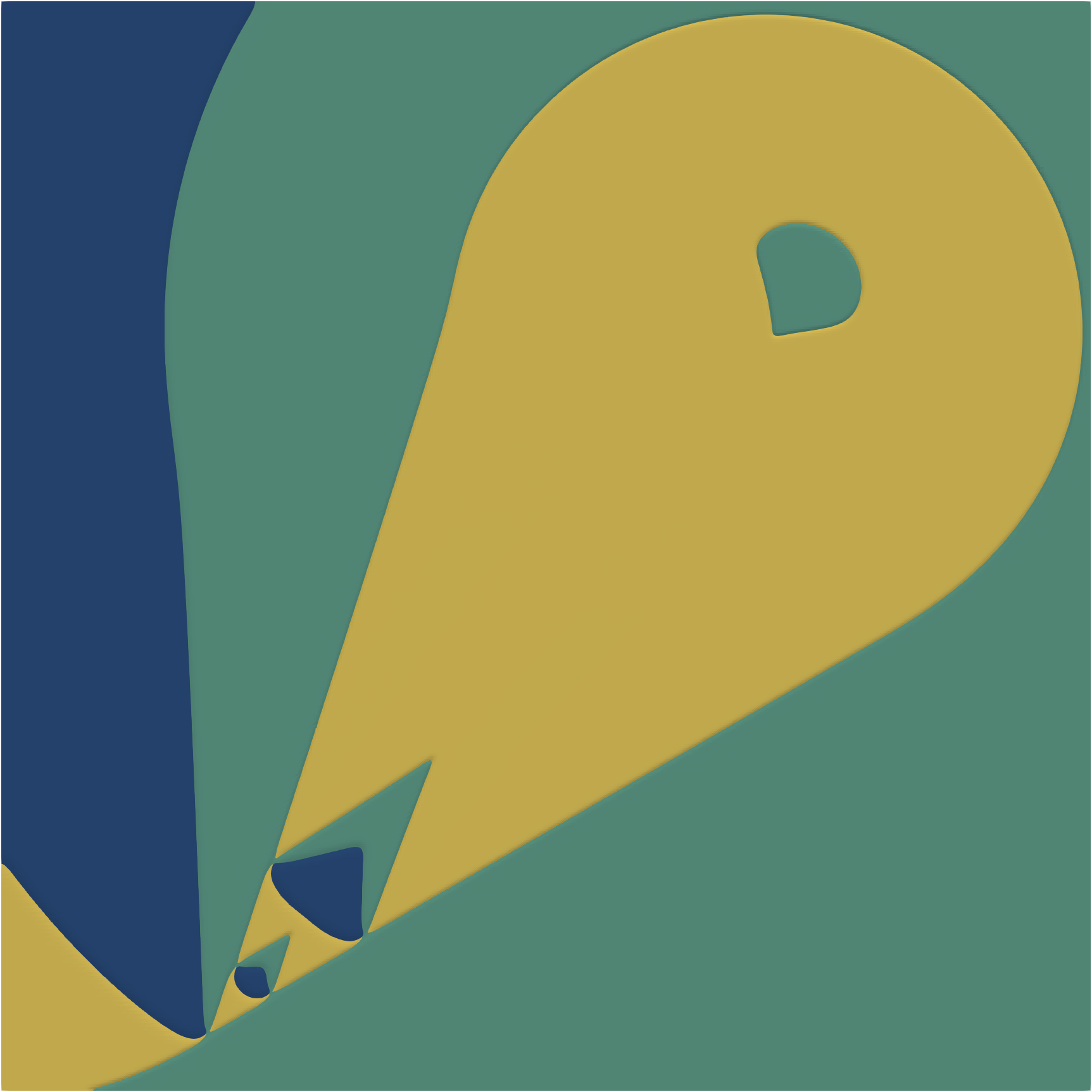}
    }%
    \subcaptionbox{$\epsilon_2 = \epsilon_3 = 1$\label{fig:e2=1_e3=1}}[.5\textwidth]{%
		\includegraphics[width=.49\textwidth]{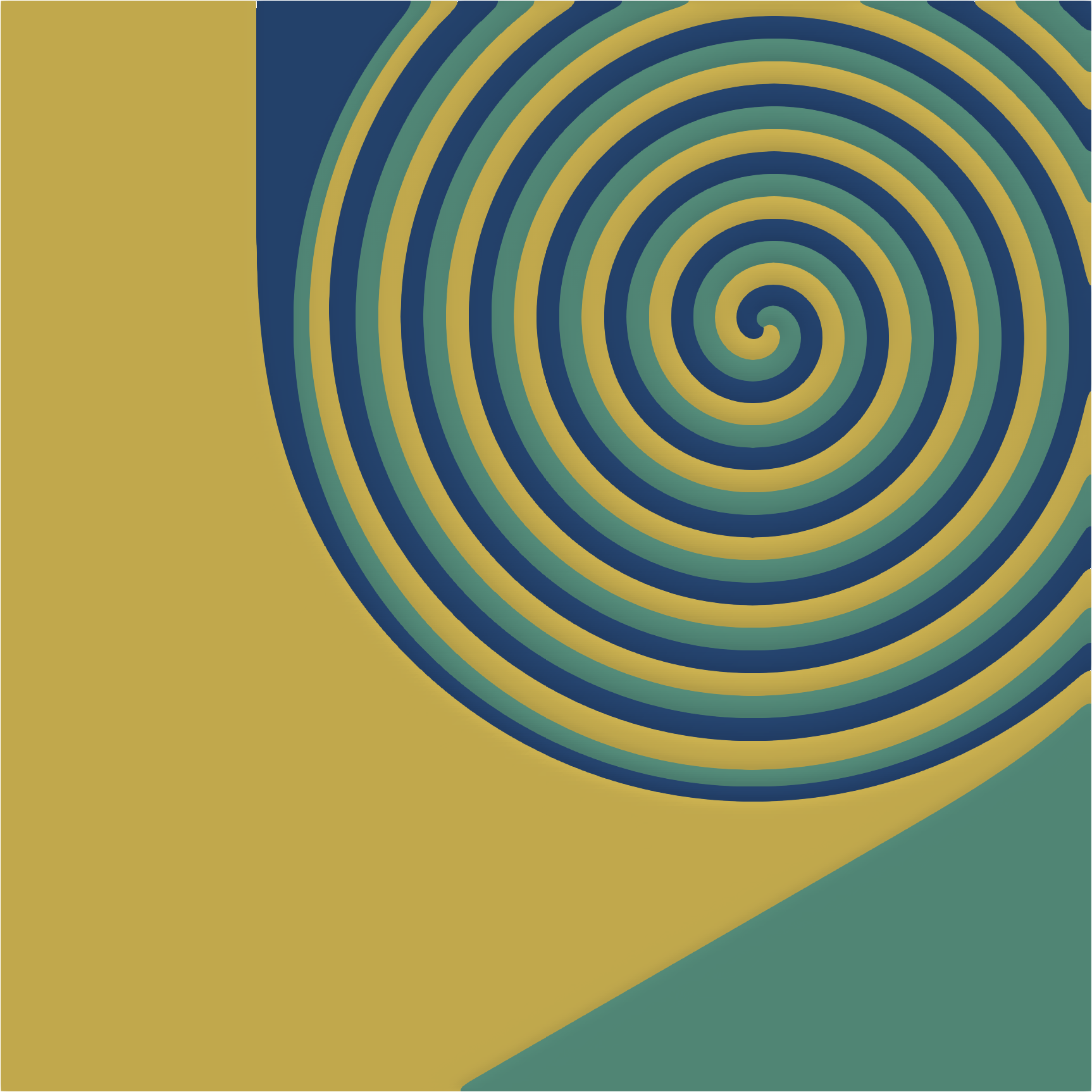}
    }
    \caption{The formation of droplet patterns for various values of the diffusion parameters $\diff[2]$ and $\diff[3]$. Solutions are plotted for $t=900$, unless otherwise stated, having evolved from the same initial conditions shown as in Fig.~\ref{fig:cycliccompetition:icecream}.}
    \label{fig:parameterSearch}
\end{figure}

Fig.~\ref{fig:stripes} demonstrates an example of the species spreading with a stripe-like structure, and corresponds to $\epsilon_{2}=0.1$ and $\epsilon_{3}=0.9$ with the initial condition shown as for Fig.~\ref{fig:cycliccompetition:icecream}, but for a larger spatial domain $\Omega$.
In this case, we observe that the invasion of the domain of coexistence occurs via the formation of a train of parallel bands which move towards the left-hand side boundary. Each band has the same width and is slightly round at the front. Behind the bands, the species distribution is highly irregular.
The spread of the domain of coexistence in the opposite direction (i.e. towards the right boundary) occurs in a different way, via the prorogation of irregular patches, which we refer to as a wave of chaos. Thus, there is a clear anisotropy in terms of patterns of the population spread depending on the direction, which in turn is determined by the initial condition. The domain occupied by the irregular patches eventually grows in size until it invades the whole domain once the bands move out of the domain. One can also see that regular droplet-like patterns travel around the edge of the chaotic domain.

\begin{figure}
    \hfill\subcaptionbox{$t = 40$}[0.495\textwidth]{%
        \includegraphics[width=0.4\textwidth]{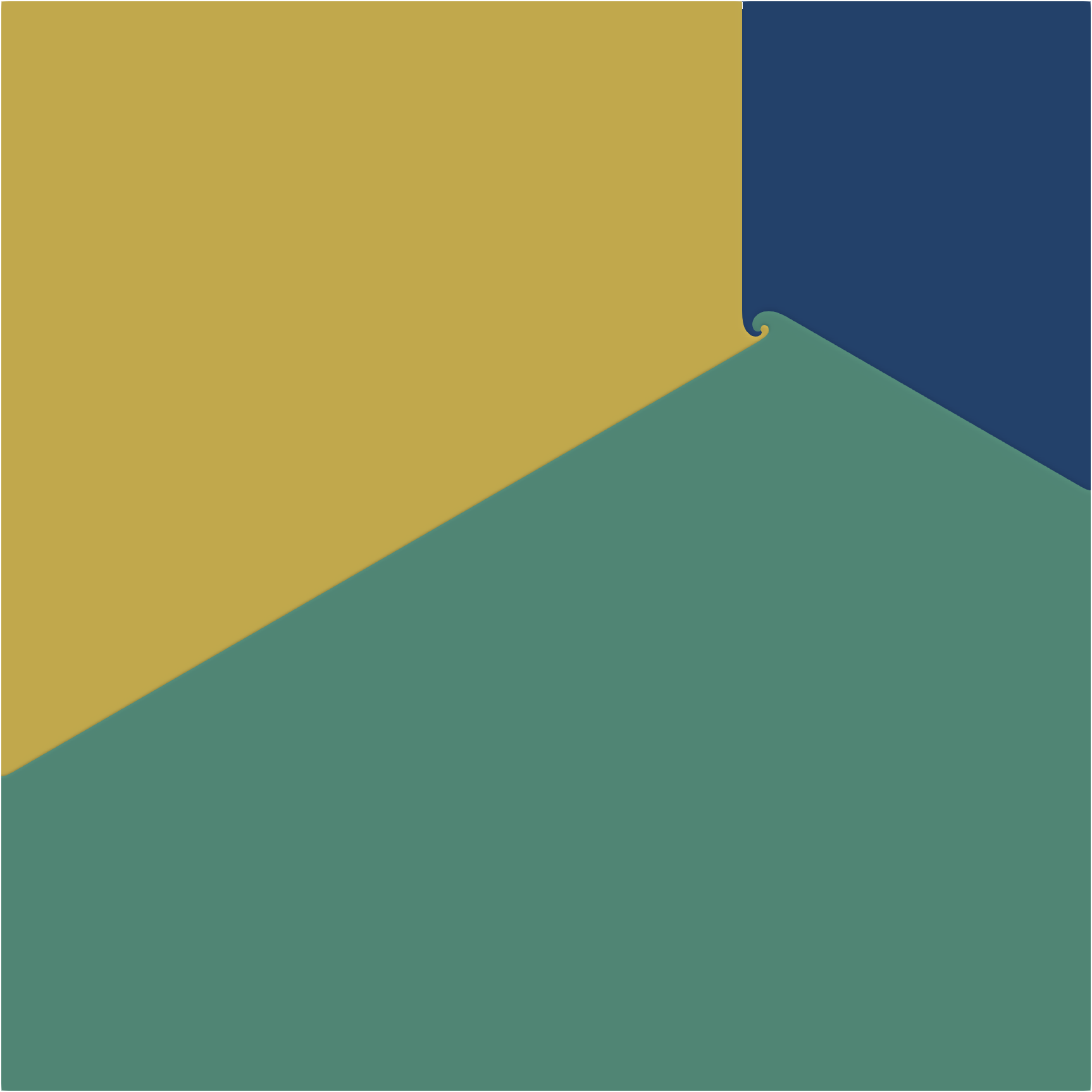}
    }\hfil%
    \subcaptionbox{$t = 160$}[0.495\textwidth]{%
        \includegraphics[width=0.4\textwidth]{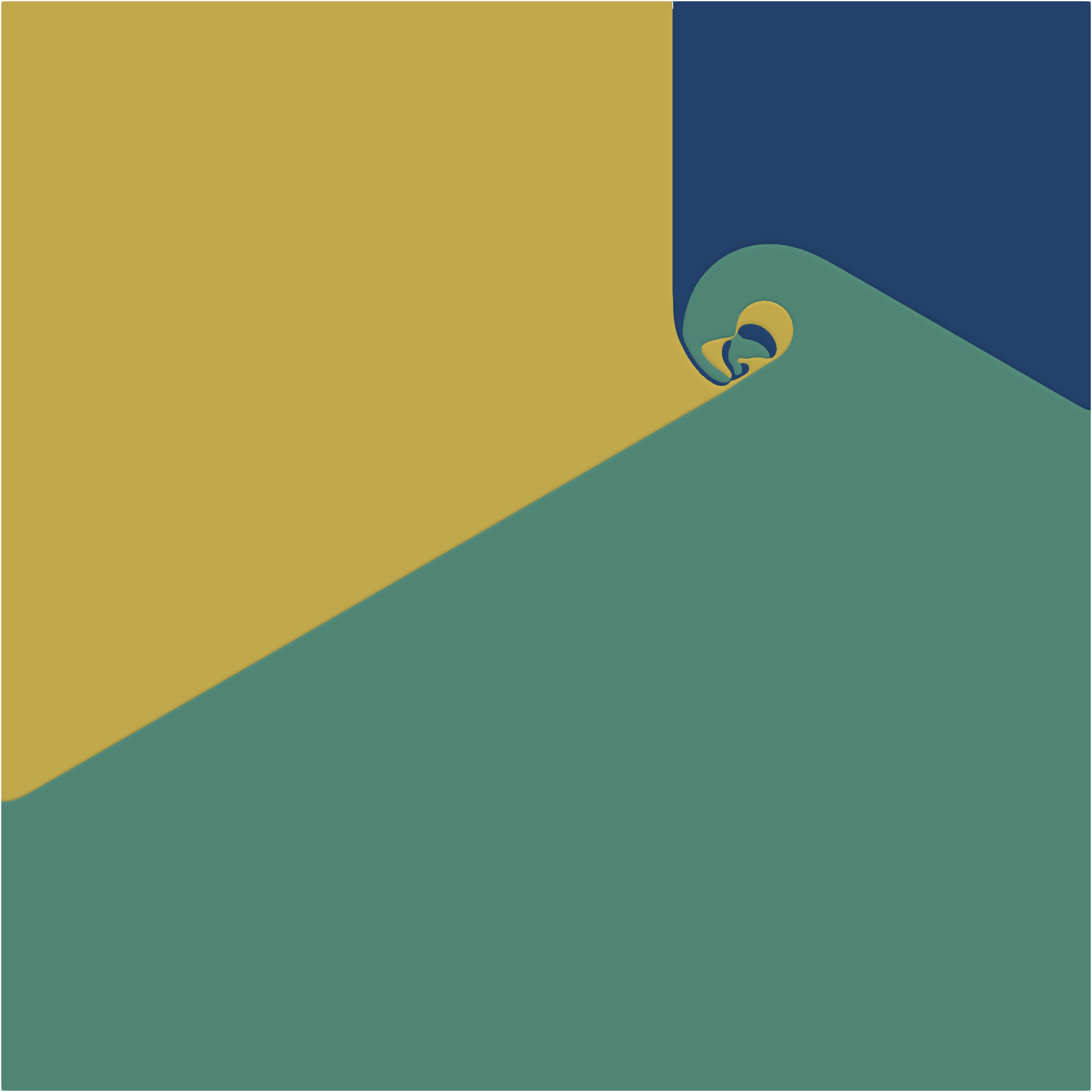}
    }\hfill
    \\[0.0cm]
    \hfill\subcaptionbox{$t = 320$}[0.495\textwidth]{%
        \includegraphics[width=0.4\textwidth]{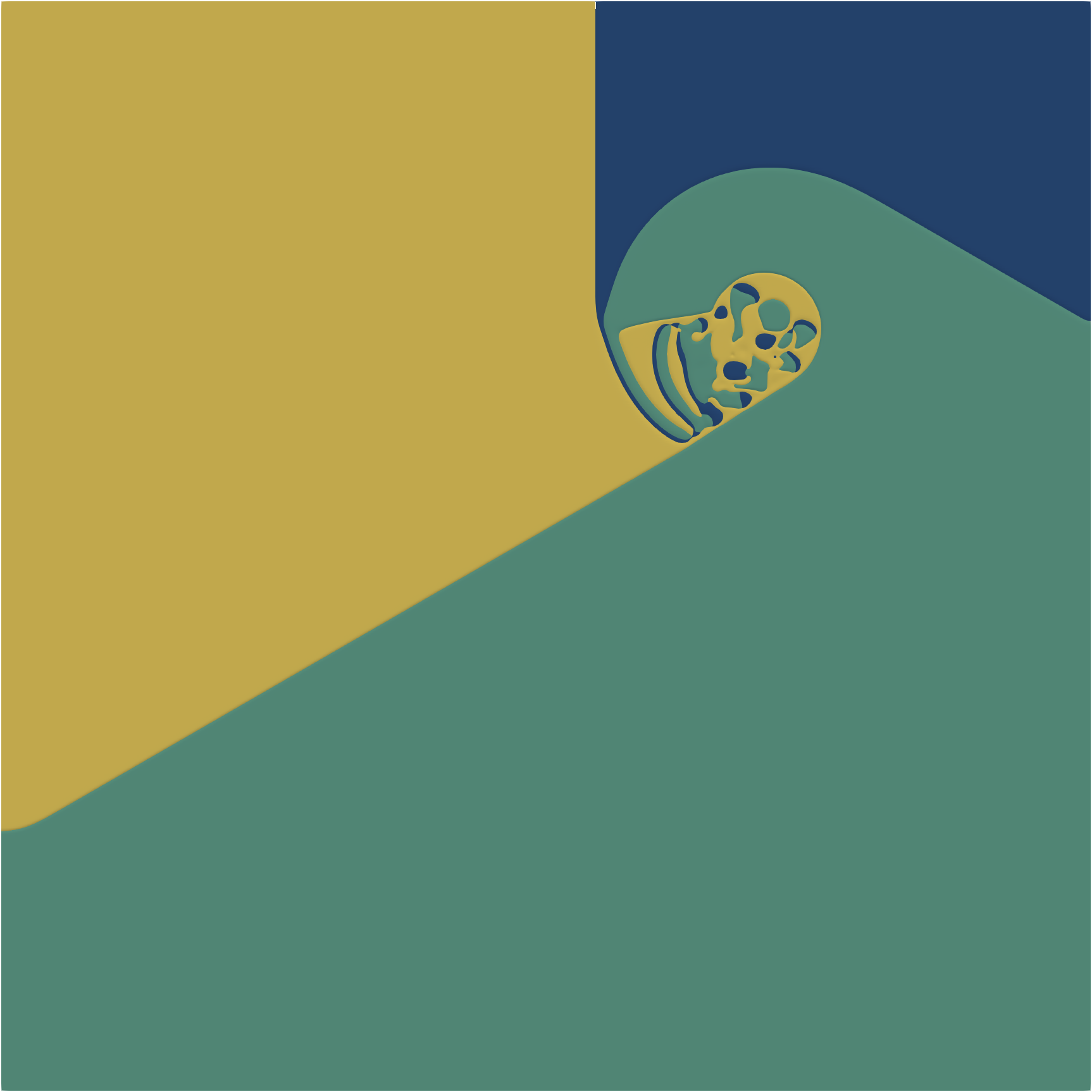}
    }\hfil%
    \subcaptionbox{$t = 640$}[0.495\textwidth]{%
        \includegraphics[width=0.4\textwidth]{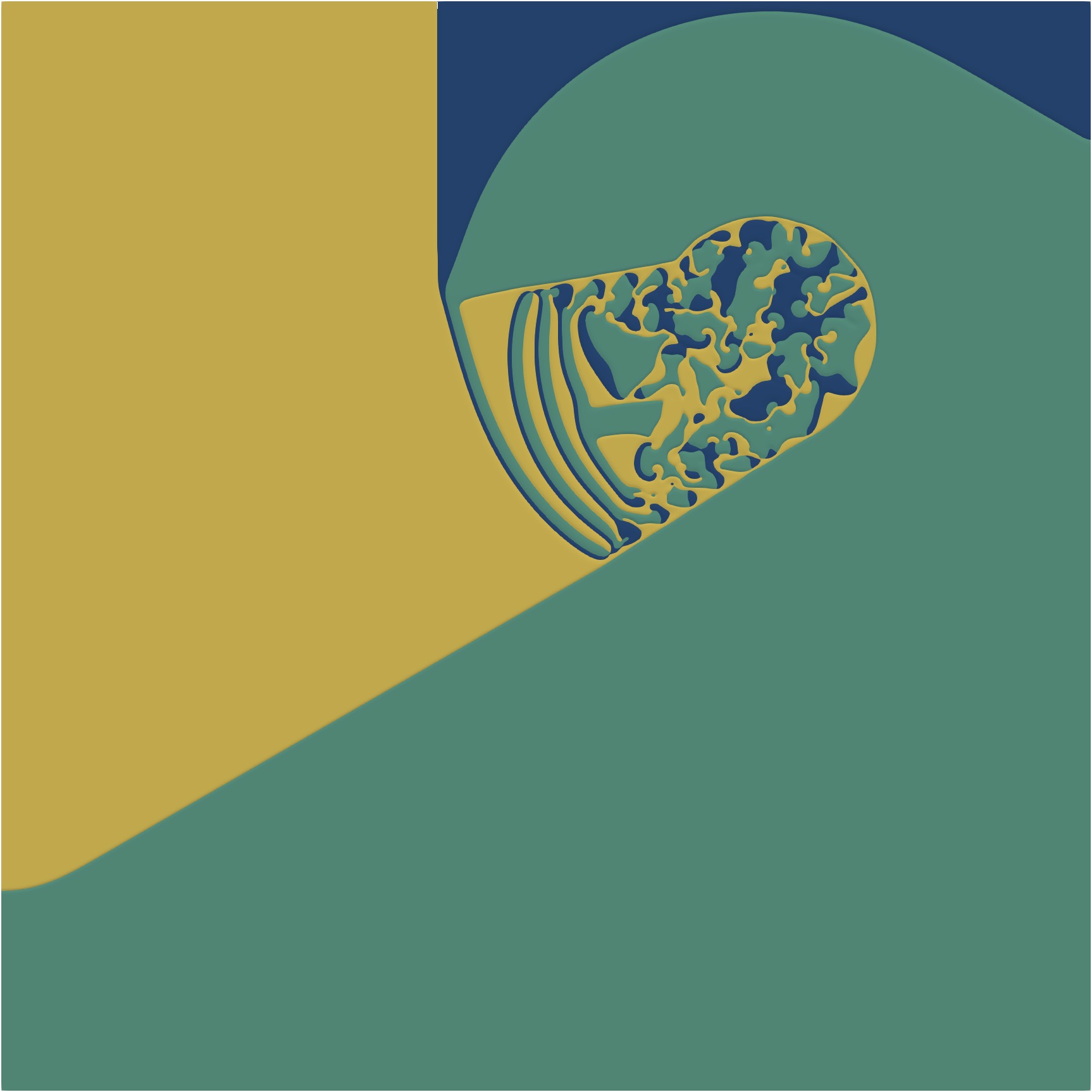}
    }\hfill
    \\[0em]
    \hfill\subcaptionbox{$t = 960$}[0.495\textwidth]{%
        \includegraphics[width=0.4\textwidth]{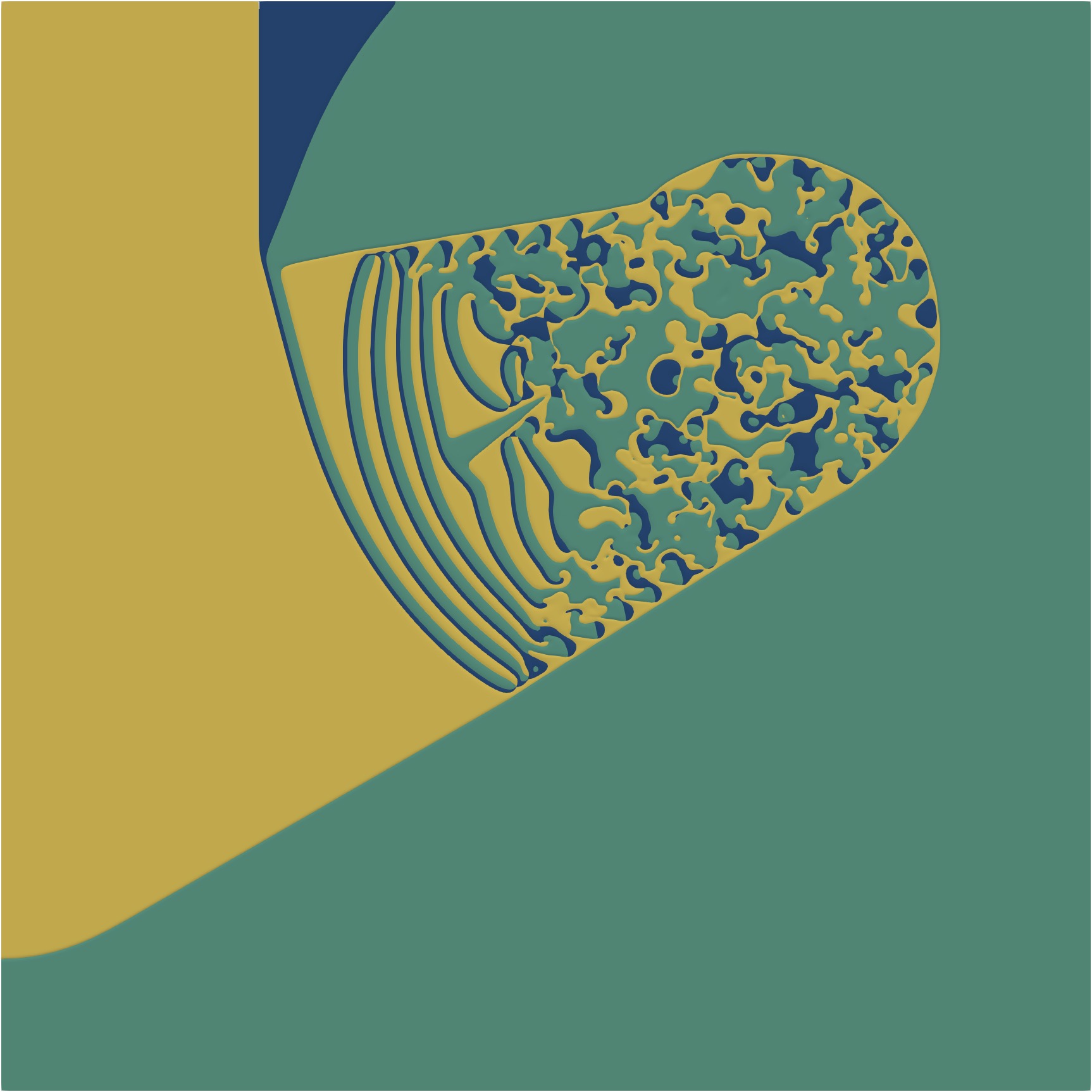}
    }\hfil%
    \subcaptionbox{$t = 1,400$}[0.495\textwidth]{%
        \includegraphics[width=0.4\textwidth]{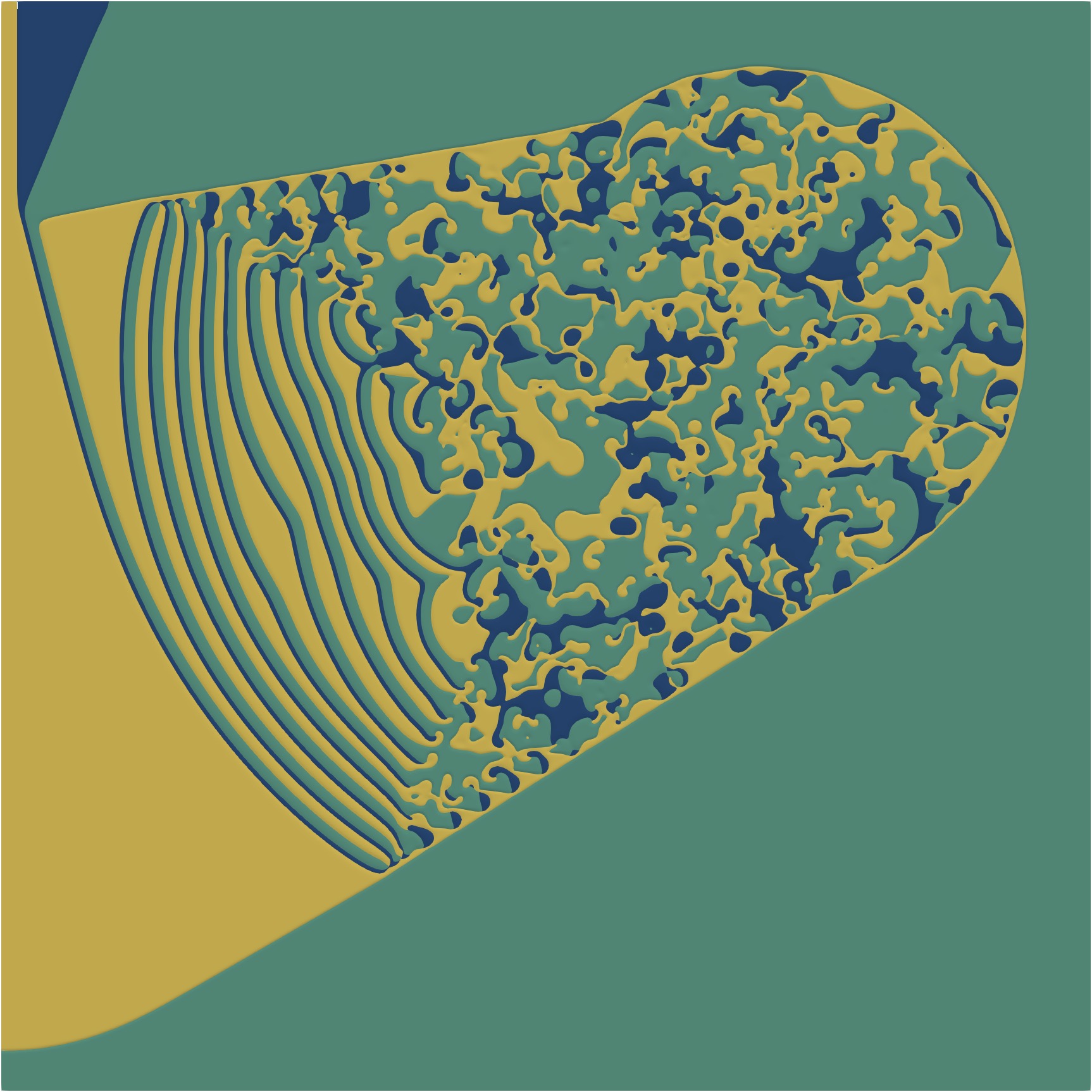}
    }\hfill
    \caption{The evolution of the `band' patterns which are observed when $\diff = (1, 0.1, 0.9)^{\top}$ and same the same initial conditions as in Fig.~\ref{fig:cycliccompetition:icecream}.}
\label{fig:stripes}
\end{figure}

The transition from the pattern of spread via droplet-like units (Fig.~\ref{fig:cycliccompetition:icecream}) to the one containing bands (Fig.~\ref{fig:stripes}) can be understood by exploring the schematic diagram shown in Fig.~\ref{fig:schematic}.
The spread of the droplets in the wedge can be described by considering pairwise interactions of species, most of which actually occur via plane wave interactions. We can neglect the presence of a third species since the density of each species rapidly drops when entering the domain dominated by another species (except the points where all three species meet, as at the tip of the wedge). In Fig.~\ref{fig:schematic}, we show the direction of the spread of plane waves of cyclic displacement of species; here $C_{i,j}$ denotes the speed of the plane wave replacing species $j$ by its stronger competitor $i$. One can also see a round interface between species 1 and species 2. The corresponding wave speed is denoted by $V_{1,2}=V_{1,2} (R)$, where $R$ is the radius of the curvature. The values of  $C_{i,j}$ and  $V_{1,2}$ can be determined by considering the one-dimensional case (in the case of $V_{1,2}$  one should explore the system in polar coordinates). Our simulations show that for the parameters from Fig.~\ref{fig:cycliccompetition:icecream}, in the one-dimensional case the prorogation of a travelling pulse composed of all three species is impossible, whereas for pairwise switch waves we have $C_{1,2}>C_{2,3}$. The curvature of the wave reduces the spread of the propagation of the front of species 1 in the droplet, thus $C_{1,2}>V_{1,2} (R)= C_{2,3}$ and the spread of the droplet becomes synchronised with $C_{2,3}$ and this gives the condition for $R$. However, in the case where the diffusion of species 3 increases (as in Fig.~\ref{fig:stripes}), our one-dimensional simulations show that $C_{2,3}>C_{1,2}$, thus $C_{2,3}>V_{1,2}(R)$ for any $R$ and the propagation of a travelling pulse composed of all three species now becomes possible. As a result, the droplet-shape structure breaks down and a one-dimensional band composed of three species is eventually formed. In the case of Fig.~\ref{fig:stripes}, our simulations demonstrate that the propagation of one-dimensional pulses is stable with respect to the two-dimensional case, i.e. small two-dimensional perturbations would not destabilise them; however, this does not provide an explanation of why formation of bands does not occur in opposite direction (i.e. towards the right), thus more investigation is needed to fully understand this problem.

\begin{figure}
\includegraphics[width=\textwidth]{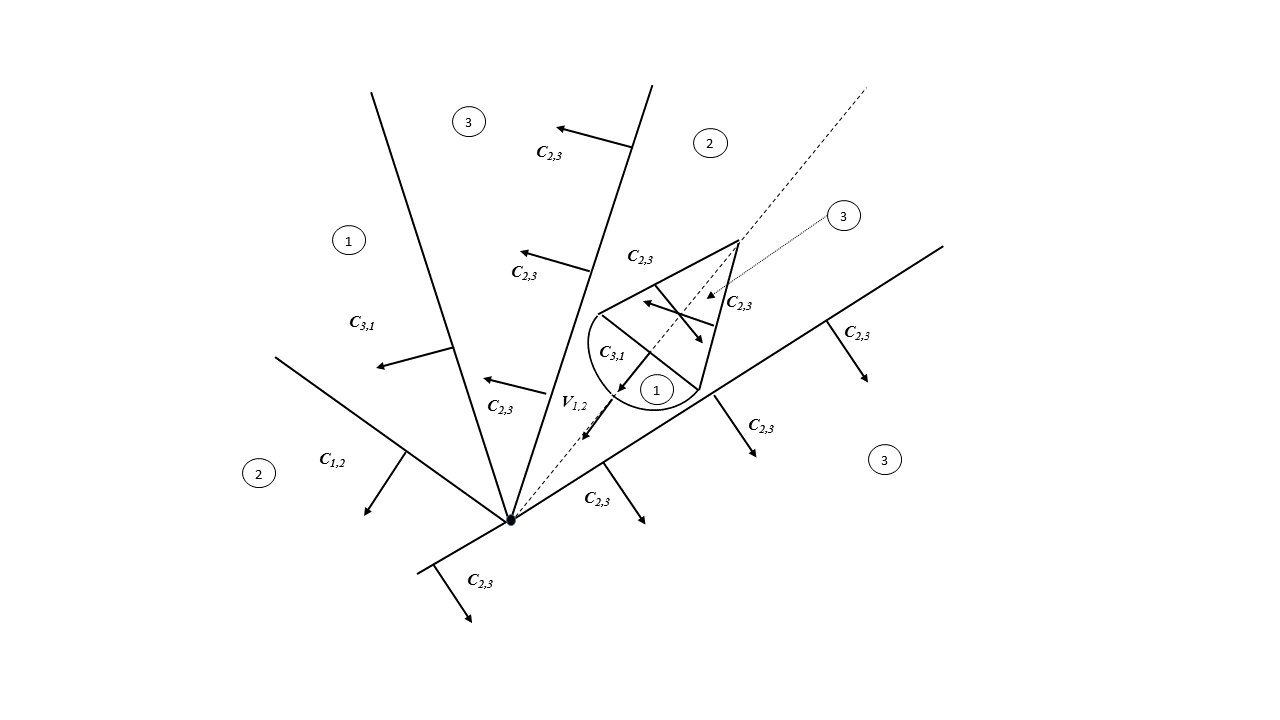}\caption{Schematic representation of the movement of a droplet-like unit. A detailed description is given in the main text.}
\label{fig:schematic}
\end{figure}

From Fig.~\ref{fig:cycliccompetition:icecream} one can also observe another new type of transient pattern which we call the dynamical droplets. Such structures are formed along the boundary of the domain in which the species coexist. This is essentially a transient regime, however, since although their life time can be long, the dynamical droplets structures will eventually collide and annihilate one another, resulting in chaotic dynamics. The tips of dynamical droplet pattern is a point where high densities of all three species meet each other. One can see in the figure that the growing dynamical droplet structure is generated by a certain small area which can be considered as a generator. This is similar to a target wave emanating from a pacemaker; however, the generation of dynamical droplet-shaped moving patches is a transient phenomenon that eventually stops after collision with irregular patches.

We also considered the other scenario of cyclic competition: conditional cyclic competition. In particular it is realised for  $\alpha_{3,1} = 1.3$ (the other interaction parameters being the same as before) for which local interactions between $\uvec[3]$ and $\uvec[1]$ are bistable. Our numerical study revealed a new pattern of invasion domain shown in Fig.~\ref{fig:cycliccompetition:glider}.
The figure corresponds to $\epsilon_{2}=0.55$ and $\epsilon_{3}=0.5$. The initial spiral tip emanates a droplet composed of species 2 and 3. The droplet takes the form of a the glider which moves towards the lower left corner. Some glider-shaped structures quickly disappear whereas some can persist by splitting and merging with other patches. The tip which generates glider-shaped structures eventually reaches the boundary and disappear. Irregular patches produced in this dynamical regime can persist for a long time; however, eventually they disappear and only one species survives. Thus, this new regime of invasion can only guarantee species coexistence at the moving front of invasion. Our investigation of the parameter space of diffusion coefficients reveals that this pattern is robust and is observed within a 10\% variation of $\epsilon_{2}=0.55$ and $\epsilon_{3}=0.5$.

\begin{figure}
    \subcaptionbox{$t = 60$}[0.495\textwidth]{%
        \includegraphics[width=0.49\textwidth]{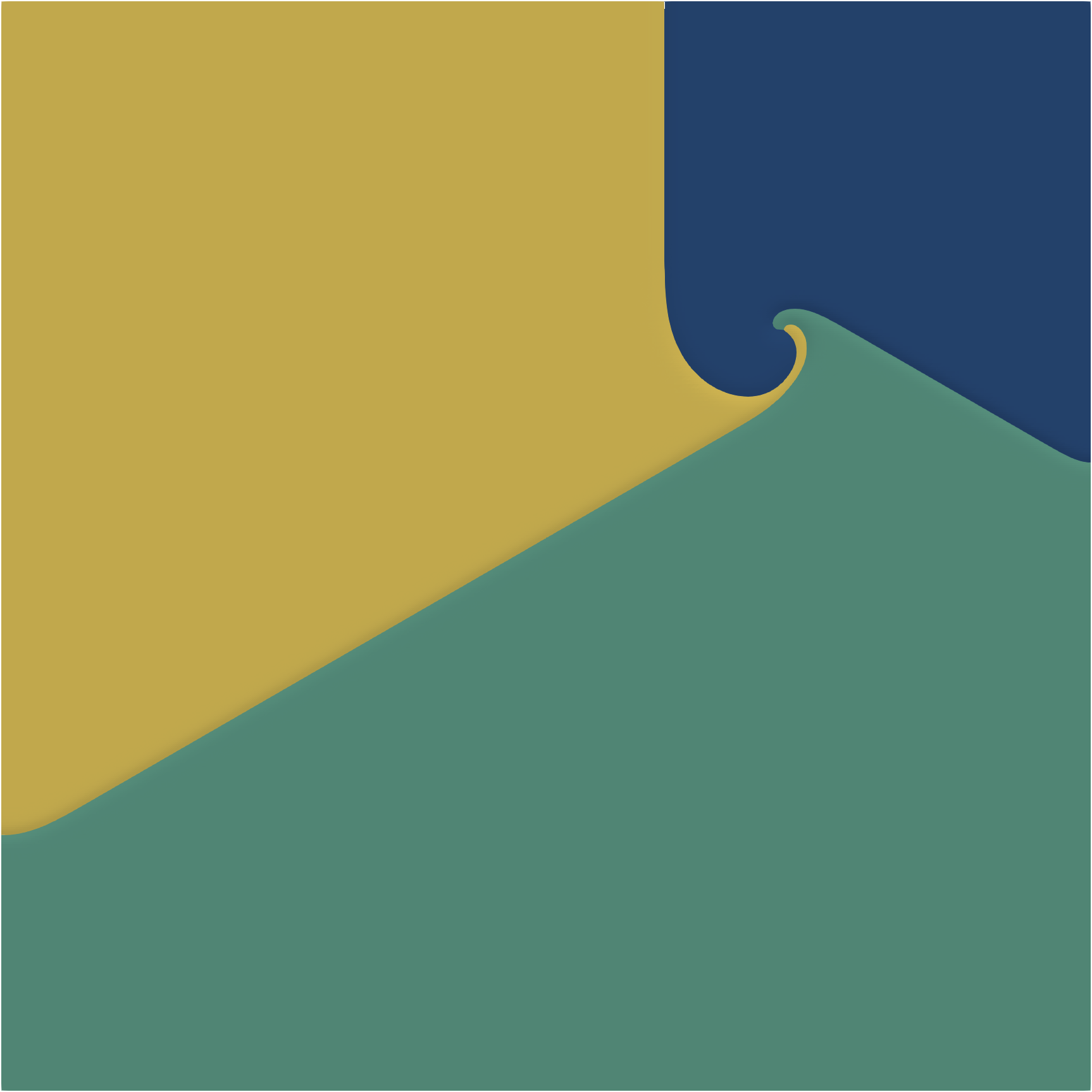}
    }\hfill%
    \subcaptionbox{$t = 120$}[0.495\textwidth]{%
        \includegraphics[width=0.49\textwidth]{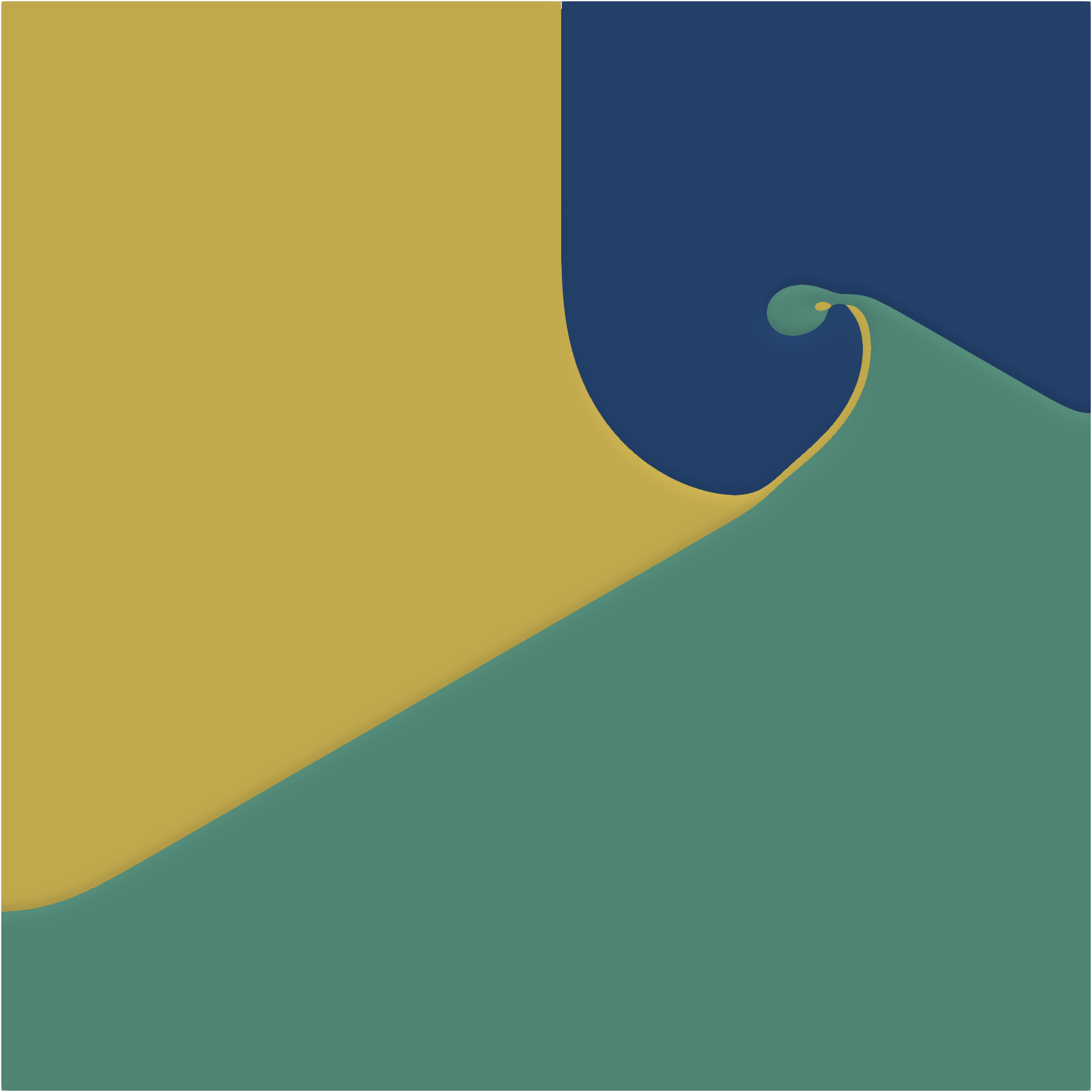}
    }
    \\[1em]
    \subcaptionbox{$t = 160$}[0.495\textwidth]{%
        \includegraphics[width=0.49\textwidth]{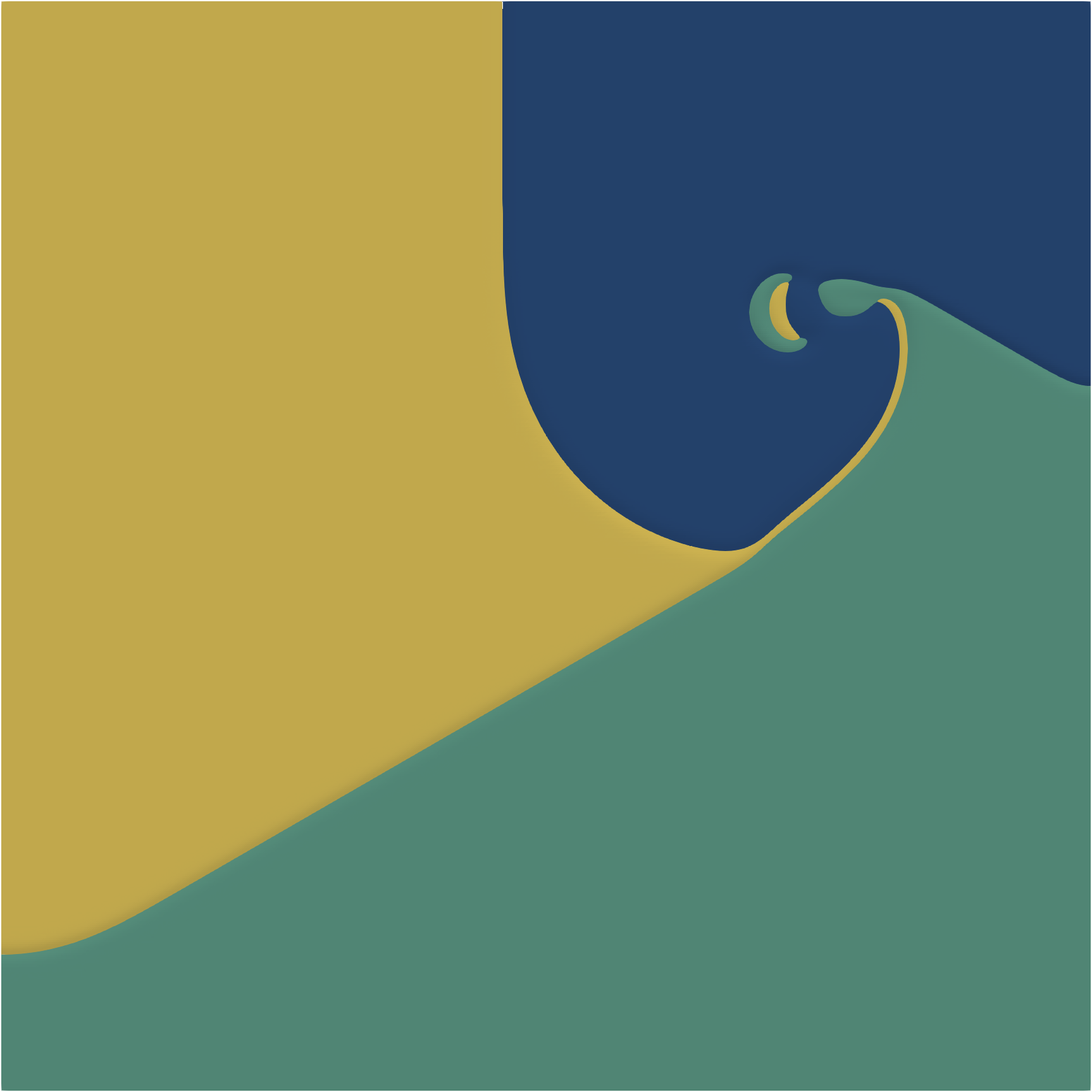}
    }\hfill%
    \subcaptionbox{$t = 290$\label{fig:glider_290}}[0.495\textwidth]{%
        \includegraphics[width=0.49\textwidth]{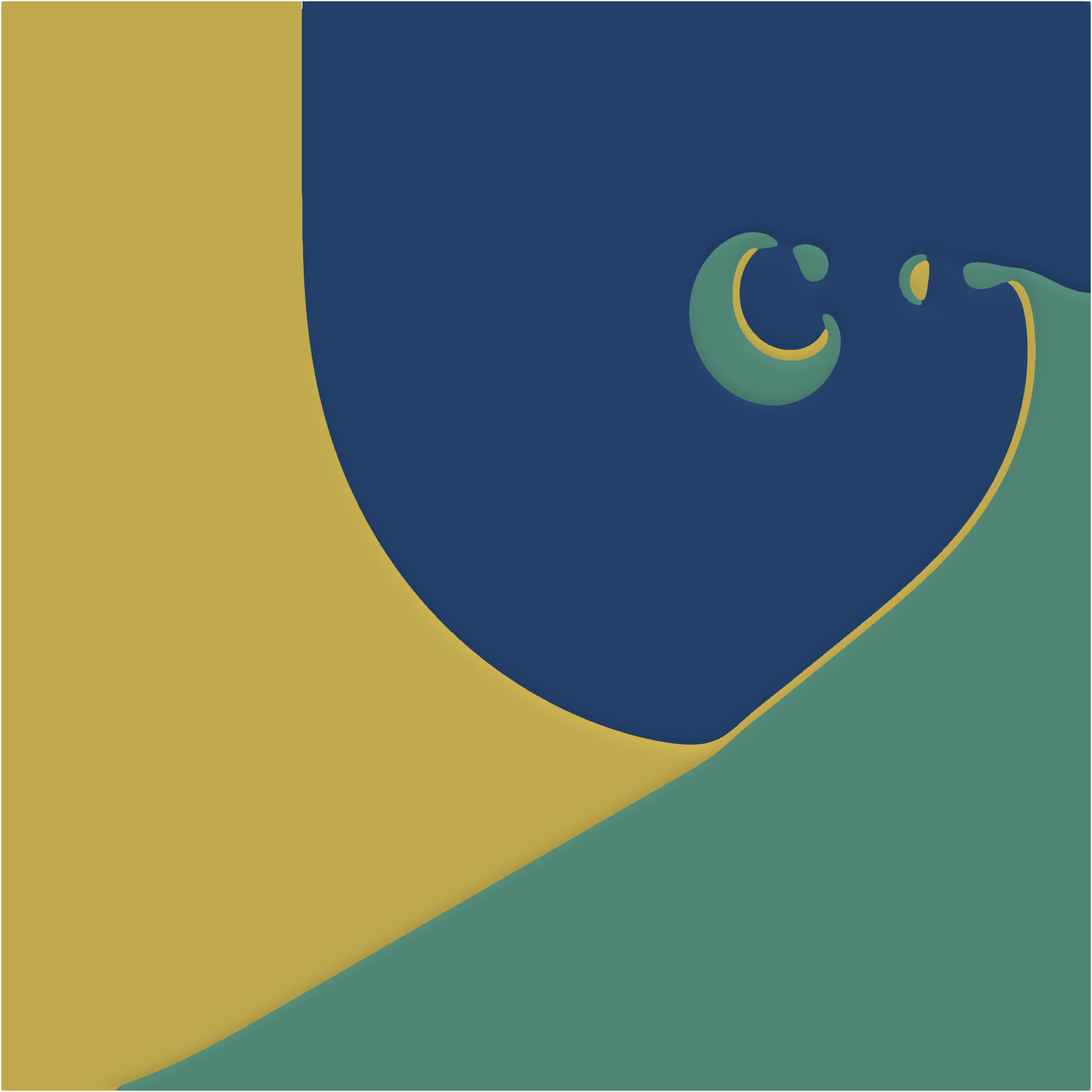}
    }
    \caption{The evolution of the `glider' patterns which are observed when $\epsilon_{2}=0.55$ and $\epsilon_{3}=0.5$, $\alpha_{3,1} = 1.3$ and the initial conditions are the same as in Fig.~\ref{fig:cycliccompetition:icecream}. The three colours indicate the regions of the domain in which each of the three competing species dominates.}
    \label{fig:cycliccompetition:glider}
\end{figure}

Finally, we extend our analysis to the three-dimensional case. We focus on exploring the three-dimensional analogue of the regular droplet-like structured observed in the two-dimensional model under classical cyclic competition shown in Fig.~\ref{fig:cycliccompetition:icecream}. Our results when the species are initially distributed within separate cubic regions are shown in Fig.~\ref{fig:cycliccompetition:3D}.

\begin{figure}
    \hfill\subcaptionbox*{}[0.05\textwidth]{%
        \begin{sideways}$\qquad\qquad\qquad\qquad t=0$\end{sideways}
    }\hfil%
    \subcaptionbox*{}[0.47\textwidth]{%
        \includegraphics[width=0.4\textwidth]{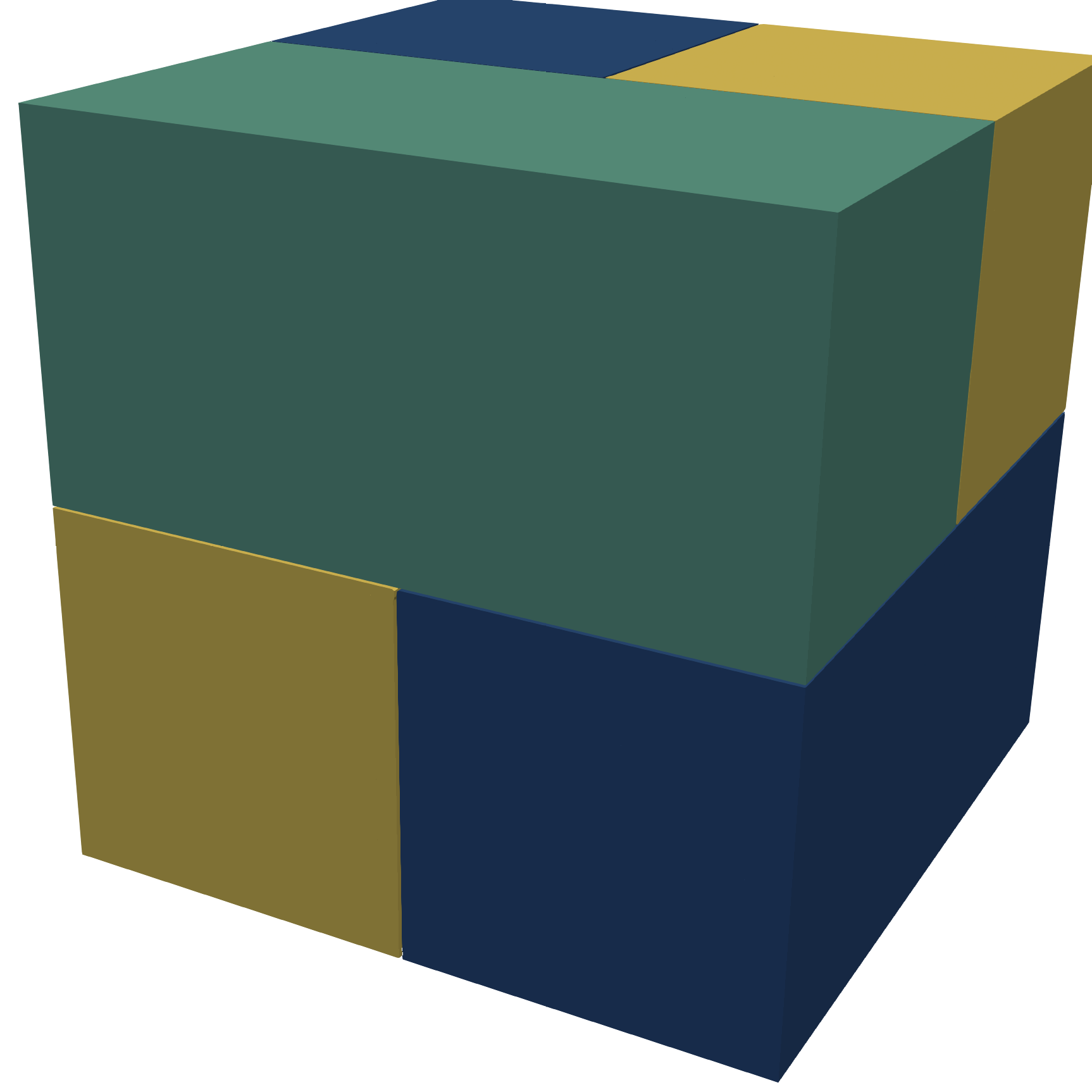}
    }\hfil%
    \subcaptionbox*{}[0.47\textwidth]{%
        \includegraphics[width=0.4\textwidth]{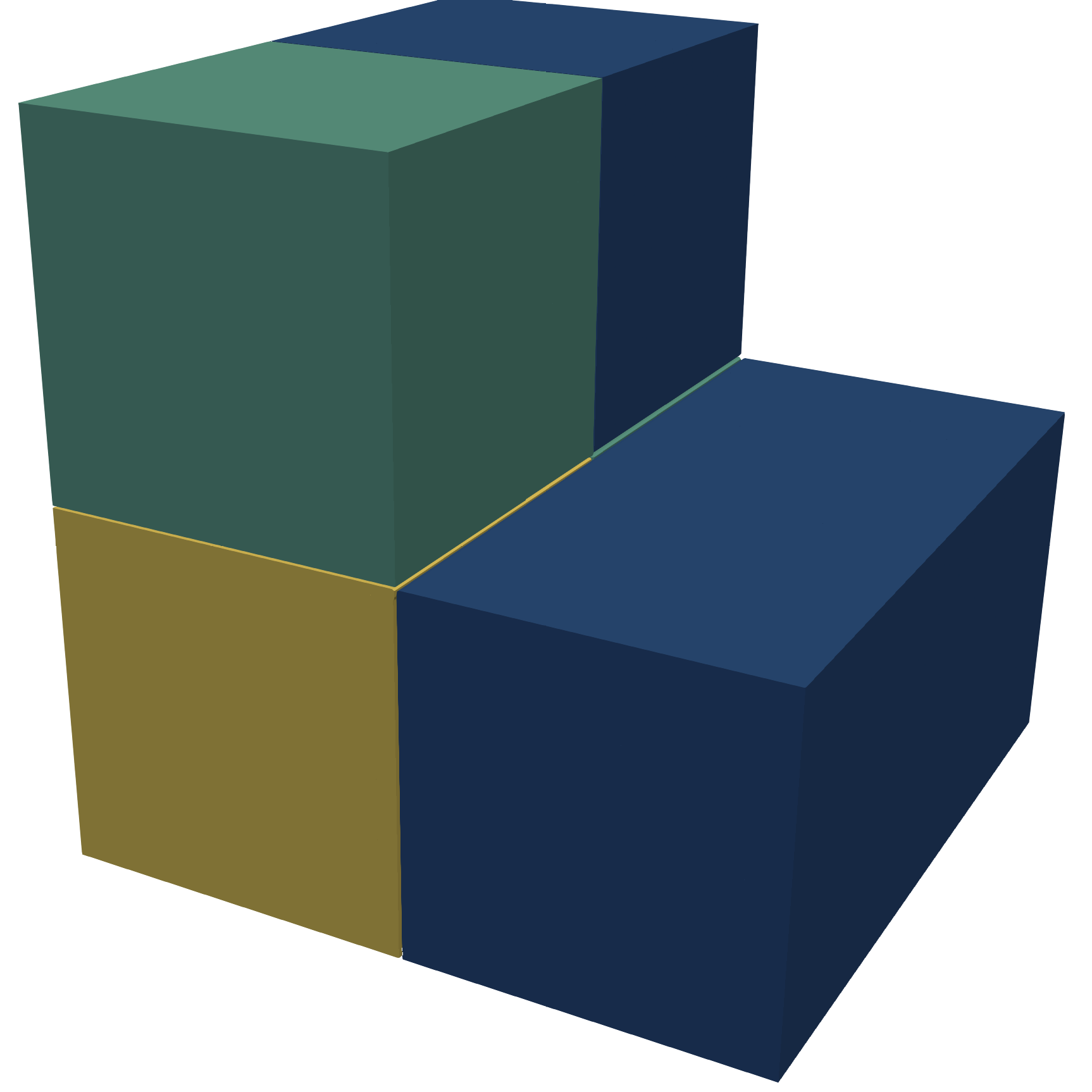}
    }\hfill
    \\[-0.6em]
    \hfill\subcaptionbox*{}[0.05\textwidth]{%
        \begin{sideways}$\qquad\qquad\qquad\qquad t=160$\end{sideways}
    }\hfil%
   \subcaptionbox*{}[0.47\textwidth]{%
        \includegraphics[width=0.4\textwidth]{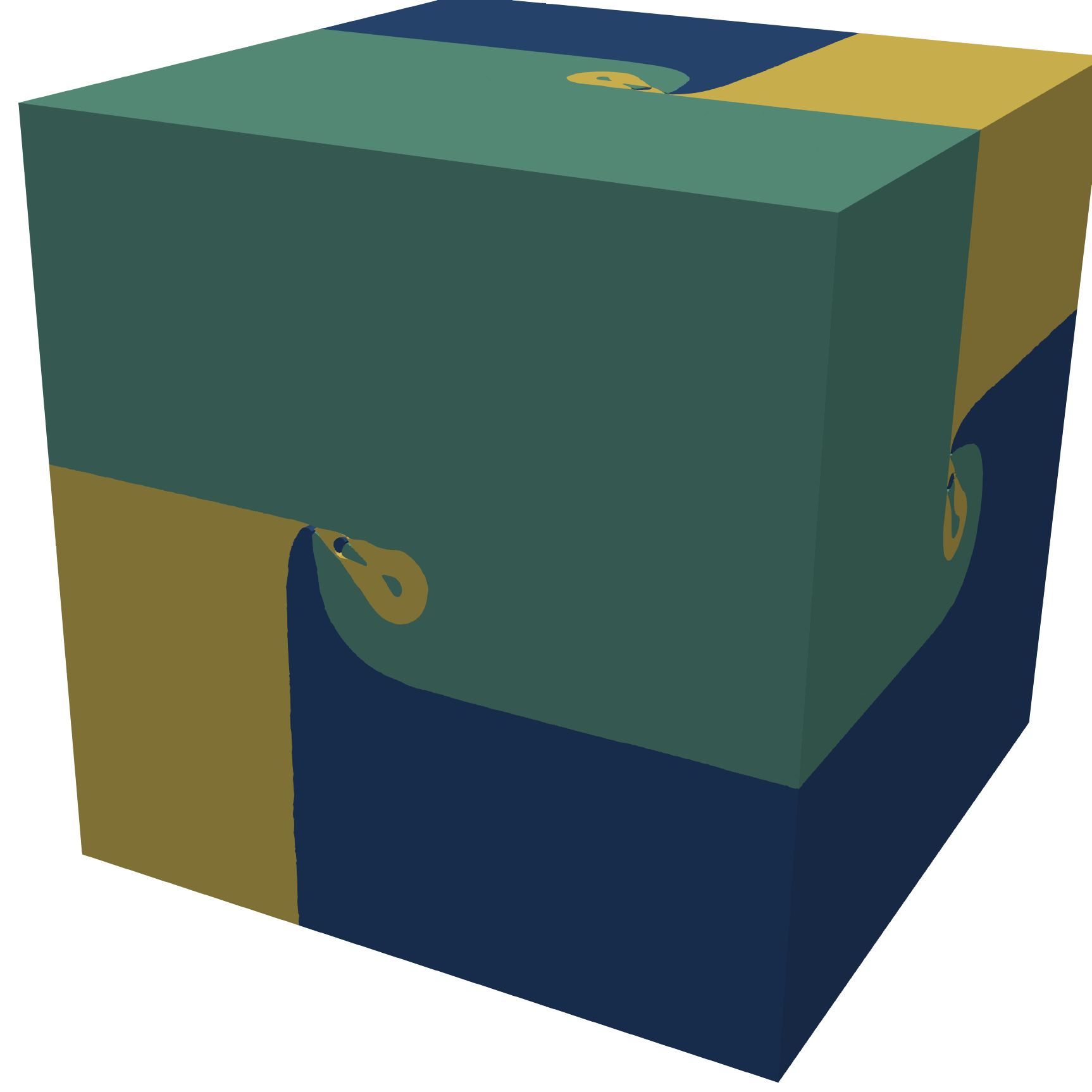}
    }\hfil%
    \subcaptionbox*{}[0.47\textwidth]{%
        \includegraphics[width=0.4\textwidth]{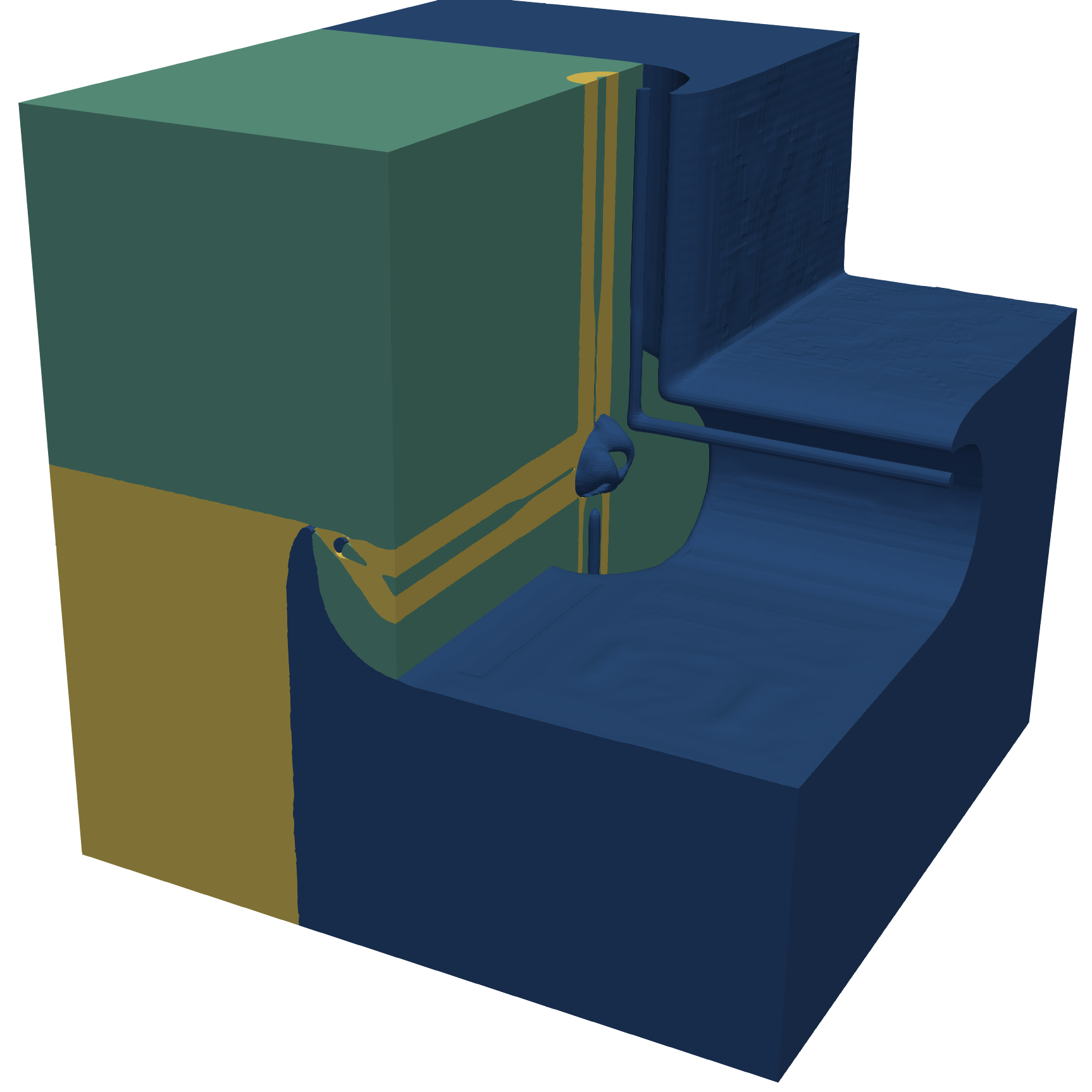}
    }\hfill
    \\[-0.6em]
    \hfill\subcaptionbox*{}[0.05\textwidth]{%
        \begin{sideways}$\qquad\qquad\qquad\qquad t=560$\end{sideways}
    }\hfil%
    \subcaptionbox*{}[0.47\textwidth]{%
        \includegraphics[width=0.4\textwidth]{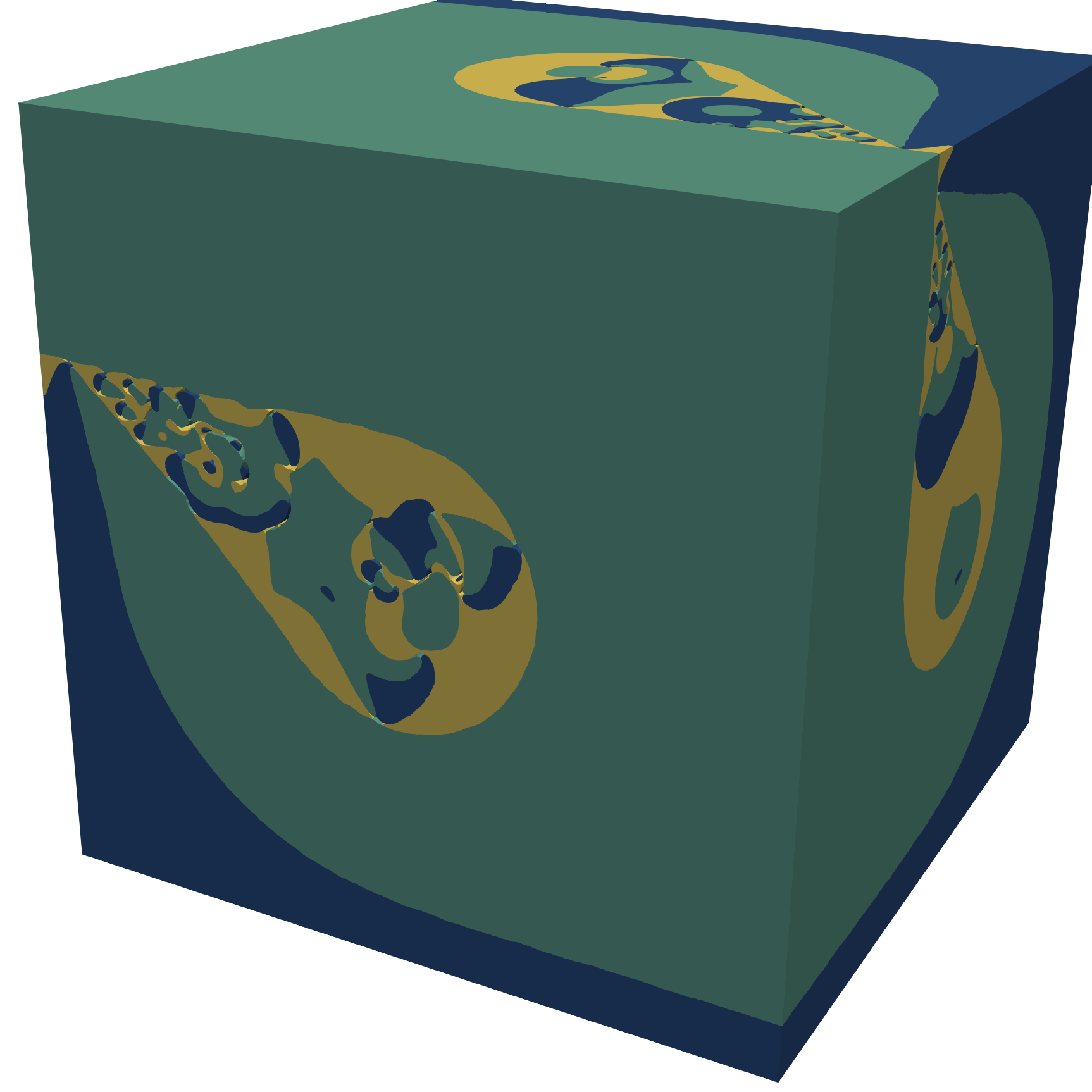}
    }\hfil%
    \subcaptionbox*{}[0.47\textwidth]{%
        \includegraphics[width=0.4\textwidth]{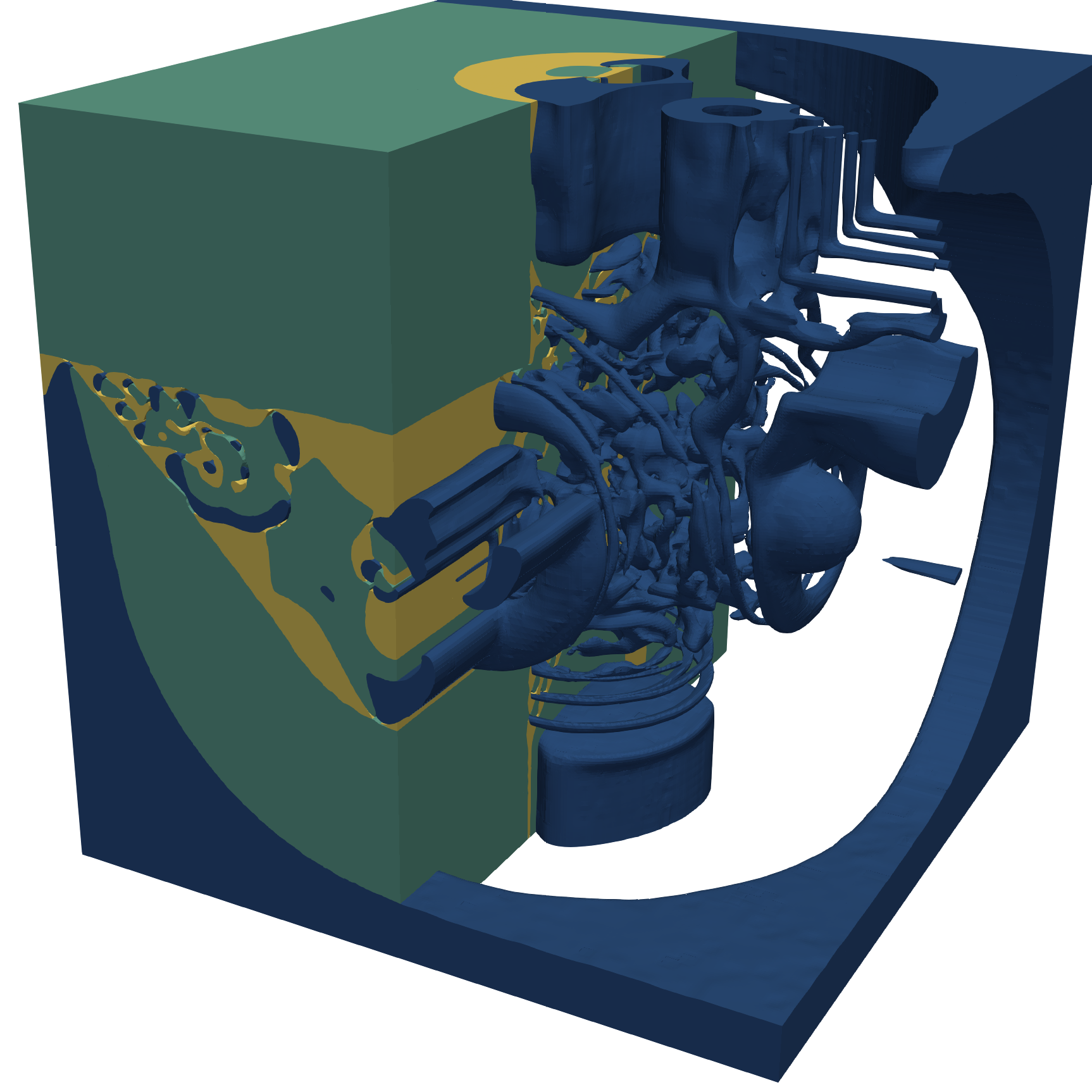}
    }\hfill
    \\[-1em]
    \caption{Evolution in three spatial dimensions using the parameters $\epsilon_{2}=0.1$ and $\epsilon_{3}=0.6$ (as in Fig.~\ref{fig:cycliccompetition:icecream}), computed on the domain $\Omega = [0,600]^3$. The figures in the left column show the domains dominated by each species, while those on the right show  $\uvec[2]$ and $\uvec[3]$ only in the subdomain $x>300$.}
    \label{fig:cycliccompetition:3D}
\end{figure}

We can observe droplet-like regular structures as well. However, such patterns appear to be prismatic analogues of the 2-dimensional structures. We found as well that the long time dynamics are essentially irregular and chaotic: the area occupied by irregular patches of species coexistence will eventually invade the entire domain $\Omega$ (the result is not shown here for brevity). Note that similar results are obtained when the species initially dominate within overlapping yet mutually exclusive spherical regions, although for brevity these are not reported here.

\section{A posteriori error analysis}\label{sec:apost}
The adaptive numerical scheme described in Section~\ref{sec:numericalscheme}, and used to compute the numerical results presented here, relies on a \emph{local error indicator functional} which is used to decide which mesh elements are refined or coarsened.
This error indicator is derived from a fully computable estimate for the numerical error measured in the $L^2(0,t; H^1(\Omega))$ norm in the form of a residual-type a posteriori error estimate, developed in this section.

We develop the a posteriori error analysis  for the spatially semidiscrete method~\eqref{eq:finiteElementForm}; this choice is commented upon at the end of the section. To keep the presentation clear, the analysis we present here focusses on the single-equation situation (i.e. when $m=1$), although all the results follow for when $m>1$ in a completely analogous fashion.
In this case, $\diff > 0$ is simply a constant scalar.

We introduce the function $\fh : H^1(\domain) \to \Vh$ such that, for any $v \in H^1(\domain)$,
\begin{equation*}
	(\fh(v), \testfn) = (f(v), \testfn) \qquad \forall \testfn \in \Vh,
\end{equation*}
so that $\fh(v)$ is the $L^2(\domain)$-orthogonal projection of $f(v)$ into $\Vh$.
Next, we define the discrete Laplacian operator $\Delta_h:\Vh \to \Vh$ to be such that, for any $\vh \in \Vh$,
\begin{align*}
	(- \diff \Delta_h \vh, \testfn) = (\diff \nabla \vh, \nabla \testfn) \qquad \forall \testfn \in \Vh.
\end{align*}
Following~\cite{MN03}, we also define the elliptic reconstruction operator $\elliprecon{}: \Vh \to H^1(\domain)$ to satisfy, for any $\vh \in \Vh$,
\begin{equation*}
	(\diff \nabla \elliprecon{\vh}, \nabla v) = (g_h(\vh), v) \qquad \forall v \in H^1(\domain),
\end{equation*}
where $g_h(\vh) := - \diff \Delta_h \vh - \fh(\vh) + f(\vh)$.
Since $\uh$ thus coincides with the finite element solution to the elliptic problem with true solution $\elliprecon{\uh}$, we assume that we have at our disposal a computable a posteriori error bound of the form
\begin{equation}\label{eq:associatedEllipticBound}
	\norm{\elliprecon{\uh} - \uh}_{H^{\reg}} \leq \Celip E_\reg(\uh)
\end{equation}
for $\reg = 0,1$, where $H^0 = L^2$; see, e.g., \cite{AO00,V13} for several examples of such results.
For instance, a residual-type a posteriori bound on the error would be of the form
\begin{equation}\label{eq:ellipticEstimator}
	E_r(\uh)^2 = \sum_{\element \in \Th} h_{\element}^{2{\reg}} \norm{\diff \Delta \uh - \diff \Delta_h \uh - \fh(\uh) + f(\uh)}^2_{L^2(\element)} + \sum_{\edge \in \edges} h_{\edge}^{2\reg - 1} \norm{ \diff \jump{ \nabla \uh} }^2_{L^2(\edge)},
\end{equation}
where $\Th$ denotes the set of mesh elements, $\edges$ denotes the set of mesh faces, and $h_{\element}$ and $h_{\edge}$ denote the diameter of a mesh element and face, respectively.
The notation $\jump{\nabla \uh}$ is used to denote the \emph{jump} of $\nabla \uh$ over a mesh face, with the definition that, for a point $\vx \in \edge$, $\jump{\nabla \uh}(\vx) = \vec{n} \cdot ((\nabla \uh)^+(\vx) - (\nabla \uh)^-(\vx))$, where $(\nabla \uh)^+(\vx)$ and $(\nabla \uh)^-(\vx)$ are the values at $\vx$ of $\nabla \uh$ on each of the elements $\element^+$ and $\element^-$ meeting at the face $\edge$ respectively, and $\vec{n}$ denotes the unit vector normal to $\edge$ pointing from $\element^+$ to $\element^-$.
In the interest of brevity, throughout the remainder of this section, we shall use $\norm{\cdot}$ to denote the norm on $L^2(\domain)$.
Within the analysis, we shall make use of the following  consequence of H\"{o}lder's inequality
\begin{equation}
	\int_\domain \alpha^{\gamma+1}\beta \dx \leq \frac{\gamma+1}{\gamma+2} \int_\domain \alpha^{\gamma+2} \dx + \frac{1}{\gamma+2} \int_\domain \beta^{\gamma+2} \dx,
	\label{eq:integralProductInequality}
\end{equation}
that is valid for $\alpha, \beta > 0$, and the result of Lemma 5.1 in~\cite{CGJ13}, that for any $v \in H^1(\domain)$
\begin{equation}
	\norm{v}_{L^{\gamma+2}(\domain)}^{\gamma+2} \leq \Cdom \norm{v}^\gamma \norm{\nabla v}^2,
	\label{eq:gammaPlus2Inequality}
\end{equation}
where $\Cdom$ is a constant depending only on the domain $\domain$. We are now in a position to derive an a posteriori bound on the error between the exact and the finite element solution.

\begin{theorem}\label{aposteriori}
Let $\u$ be the solution of~\eqref{eq:weakForm} and $\uh$ the solution of~\eqref{eq:finiteElementForm}, respectively,
under the assumption that the nonlinear term $f$ satisfies the Lipschitz-style growth condition~\eqref{eq:growthBound}.
The error  $\error := \u - \uh$ satisfies the $L^{\infty}(0,t; L^2(\Omega))$ error estimate
\begin{align*}
	C\max_{s \in [0,t]}\norm{\error(s)}^2 &\leq \max_{s \in [0,t]} E_0(\uh(s))
	+ e^{\expconst t} \int_0^t E_0(\uht(s))^2 + 4M \Cf E_0(\uh(s))^2 \ds
	\\&\quad
	+ e^{\expconst t} \int_0^t C_1 E_0(\uh(s))^\gamma E_1(\uh(s))^2 \ds,
\end{align*}
and the $L^2(0,t; H^1(\Omega))$ error estimate
\begin{align*}
	C\int_0^t \diff \norm{ \nabla \error}^2 \d s &\leq \int_0^t \diff E_1(\uh(s))^2 \ds
	+ e^{\expconst t} \int_0^t E_0(\uht(s))^2 + 4M \Cf E_0(\uh(s))^2 \ds
	\\&\quad
	+ e^{\expconst t} \int_0^t C_1 E_0(\uh(s))^\gamma E_1(\uh(s))^2 \ds,
\end{align*}
where $M := \sup_{t \in [0,T]}(1 + 2\norm{\uh(t)}_{L^\infty(\domain)}^\gamma)$.
In each case, the positive constant $C$ is independent of $h$, $\u$ and $\uh$.
\end{theorem}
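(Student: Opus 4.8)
The plan is to adapt the elliptic reconstruction framework of~\cite{MN03} to the present semilinear setting, using the growth condition~\eqref{eq:growthBound} together with the inequalities~\eqref{eq:integralProductInequality} and~\eqref{eq:gammaPlus2Inequality} to control the nonlinearity \emph{a posteriori}. I would split the error through the reconstruction by writing $\error = \u - \uh = \rho - \theta$, where $\theta := \uh - \elliprecon{\uh}$ is the \emph{reconstruction error}, controlled directly by the assumed elliptic estimator~\eqref{eq:associatedEllipticBound} so that $\norm{\theta} \leq \Celip E_0(\uh)$ and $\norm{\nabla\theta} \leq \Celip E_1(\uh)$, while $\rho := \u - \elliprecon{\uh}$ is the remaining part to be estimated by an energy argument. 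The first step is the identity that makes the reconstruction effective: testing~\eqref{eq:finiteElementForm} with $\testfn \in \Vh$ and using the definitions of $\Delta_h$ and $\fh$ gives $\uht = \diff \Delta_h \uh + \fh(\uh)$ in $\Vh$, whence $g_h(\uh) = \f(\uh) - \uht$ and hence $(\diff \nabla \elliprecon{\uh}, \nabla v) = (\f(\uh) - \uht, v)$ for every $v \in H^1(\domain)$.

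Next I would subtract this reconstruction identity from the weak form~\eqref{eq:weakForm} to obtain the error equation
\[ (\rho_t, v) + (\diff \nabla \rho, \nabla v) = (\f(\u) - \f(\uh), v) + (\theta_t, v), \qquad v \in H^1(\domain), \]
the crucial cancellation being that the time derivative of the reconstruction pairs against the nonlinear residual rather than against the stiffness term. Choosing $v = \rho$ produces the energy identity $\tfrac{1}{2}\tfrac{\d}{\dt}\norm{\rho}^2 + \diff\norm{\nabla\rho}^2 = (\f(\u) - \f(\uh), \rho) + (\theta_t, \rho)$. The term $(\theta_t, \rho)$ is handled by Cauchy--Schwarz and Young, producing $\tfrac12\norm{\theta_t}^2 + \tfrac12\norm{\rho}^2$; since the reconstruction commutes with time differentiation, $\theta_t$ is the reconstruction error of $\uht$ and so $\norm{\theta_t} \leq \Celip E_0(\uht)$, which supplies the $E_0(\uht)^2$ contribution in the statement.

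The heart of the proof, and the main obstacle, is the nonlinear term $(\f(\u) - \f(\uh), \rho)$. Applying~\eqref{eq:growthBound}, bounding $\abs{\u}^\gamma$ by $\abs{\uh}^\gamma + \abs{\error}^\gamma$ up to a $\gamma$-dependent constant, and using $\abs{\error} \leq \abs{\rho} + \abs{\theta}$ together with the pointwise bound $\abs{\uh}^\gamma \leq \norm{\uh}_{L^\infty(\domain)}^\gamma$, splits this term into a \emph{linearly growing} part, estimated by $\Cf M \int_\domain \abs{\error}\abs{\rho}$ with $M = \sup_t(1 + 2\norm{\uh}_{L^\infty(\domain)}^\gamma)$, and a \emph{genuinely nonlinear} part $\Cf \int_\domain \abs{\error}^{\gamma+1}\abs{\rho}$. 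After Young's inequality the linear part yields a multiple of $\Cf M \norm{\rho}^2$, feeding the Gr\"onwall exponent $\expconst$, and a multiple of $\Cf M \norm{\theta}^2 \leq \Cf M \Celip^2 E_0(\uh)^2$, matching the $4M\Cf E_0^2$ term. For the nonlinear part I would use $\abs{\error}^{\gamma+1} \leq C(\abs{\rho}^{\gamma+1} + \abs{\theta}^{\gamma+1})$ and~\eqref{eq:integralProductInequality} to reduce everything to $\norm{\rho}_{L^{\gamma+2}(\domain)}^{\gamma+2}$ and $\norm{\theta}_{L^{\gamma+2}(\domain)}^{\gamma+2}$; inequality~\eqref{eq:gammaPlus2Inequality} then turns the latter into $\Cdom \norm{\theta}^\gamma \norm{\nabla\theta}^2 \leq \Cdom (\Celip E_0(\uh))^\gamma (\Celip E_1(\uh))^2$, which is exactly the $C_1 E_0^\gamma E_1^2$ term.

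The delicate point is the residual term $\Cf \Cdom \norm{\rho}^\gamma \norm{\nabla\rho}^2$ obtained by applying~\eqref{eq:gammaPlus2Inequality} to $\rho$: it must be absorbed into the diffusion term $\diff\norm{\nabla\rho}^2$ on the left. This is possible precisely because the ranges $\gamma < 2$ for $\spacedim = 2$ and $\gamma \leq \tfrac43$ for $\spacedim = 3$ are exactly those for which~\eqref{eq:gammaPlus2Inequality} is available, but the absorption requires $\Cf\Cdom\norm{\rho}^\gamma$ to remain below $\diff$, i.e. control of the very quantity being estimated. I would close this by a continuation (bootstrap) argument in time in the spirit of~\cite{CGKM16}: on the maximal interval where $\norm{\rho}$ is small enough for absorption to hold, the resulting differential inequality $\tfrac{\d}{\dt}\norm{\rho}^2 + \diff\norm{\nabla\rho}^2 \leq \expconst\norm{\rho}^2 + (\text{estimator terms})$ integrates via Gr\"onwall to the claimed bound, and that bound in turn shows the required smallness persists. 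Finally, integrating in time and taking the supremum gives the $L^\infty(0,t;L^2)$ estimate, integrating the gradient term gives the $L^2(0,t;H^1)$ estimate, and recombining $\error = \rho - \theta$ through the triangle inequality with $\norm{\theta} \leq \Celip E_0(\uh)$ and $\norm{\nabla\theta} \leq \Celip E_1(\uh)$ produces the stated form, including the leading $\int_0^t \diff E_1(\uh)^2 \ds$ and the initial $\max_s E_0(\uh)$ contributions.
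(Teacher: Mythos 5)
Your proposal is correct and follows essentially the same route as the paper's proof: the elliptic reconstruction splitting $\error = \rho + \theta$ (up to your opposite sign convention for $\theta$), the same decomposition of the nonlinear term into an $M$-weighted linear part and a $\abs{\u-\uh}^{\gamma+1}$ part handled via~\eqref{eq:integralProductInequality} and~\eqref{eq:gammaPlus2Inequality}, and the same continuation/bootstrap-plus-Gronwall argument to deal with the problematic $\norm{\rho}^\gamma \norm{\nabla\rho}^2$ term under a mesh-fineness smallness assumption. The only cosmetic difference is that you absorb that term into $\diff\norm{\nabla\rho}^2$ on the left-hand side, whereas the paper keeps it on the right and bounds its time integral by $\delta(\theta)^2$ inside the bootstrap set; both variants hinge on exactly the kind of smallness of the estimator that the paper makes explicit as a condition on $h_{\max}$.
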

\begin{proof}
Using the definitions of $\fh$ and the discrete Laplacian operator, we can rewrite~\eqref{eq:finiteElementForm} as
\begin{equation*}
	(\uht - \diff \Delta_h \uh - \fh(\uh), \testfn) = 0 \qquad \forall \testfn \in \Vh,
\end{equation*}
implying that $\uh$ satisfies
\begin{equation*}
	\uht - \diff \Delta_h \uh - \fh(\uh) = 0,
\end{equation*}
and therefore, from the definition of the elliptic reconstruction,
\begin{equation}\label{eq:withEllipticProjection}
	(\uht, v) + (\diff \nabla \elliprecon{\uh}, \nabla v) + (\f(\uh), v) = 0	\qquad \forall v \in H^1(\domain).
\end{equation}

To derive the required bounds, we first  decompose  the error $\error = \u - \uh$ into $\rho := \u - \elliprecon{\uh}$ and $\theta := \elliprecon{\uh} - \uh$.
Since~\eqref{eq:associatedEllipticBound} provides a bound on the reconstruction error $\theta$, we focus principally on deriving a bound for $\rho$.
Subtracting~\eqref{eq:withEllipticProjection} from the original weak form~\eqref{eq:weakForm}, we find that
\begin{equation*}
	(\rho_t, v) + (\diff \nabla \rho, \nabla v) = (-\theta_t, v) + (\f(\u) - \f(\uh), v) \qquad \forall v \in H^1(\domain),
\end{equation*}
and therefore, picking $v = \rho \in H^1(\domain)$,
\begin{equation}\label{eq:beforeNonlinearBounding}
	\frac{1}{2} \frac{d}{dt} \left( \norm{\rho}^2 \right) + \diff \norm{\nabla \rho}^2 \leq \norm{\theta_t}^2 + \norm{\rho}^2 + (\f(\u) - \f(\uh), \rho).
\end{equation}

To treat the nonlinear term in~\eqref{eq:beforeNonlinearBounding}, we use assumption~\eqref{eq:growthBound} on the growth of $\f$, to find
\begin{align*}
	\abs{(\f(\u) - \f(\uh), \rho)}
		&\leq \int_\domain \abs{\f(\u) - \f(\uh)}\abs{\rho} \dx
		\leq \Cf \int_\domain \left( 1 + \abs{\u}^\gamma + \abs{\uh}^\gamma \right) \abs{\u - \uh} \abs{\rho} \dx.
\end{align*}
The restriction on $\gamma$ implies that $\abs{a+b}^\gamma \leq 2^{\max\{1, \gamma\} - 1} (\abs{a}^\gamma + \abs{b}^\gamma) \leq  2 (\abs{a}^\gamma + \abs{b}^\gamma)$, from which, with $a = \u-\uh$ and $b = \uh$, it follows that
\begin{equation}\label{eq:twoTermBound}
	\abs{(f(\u) - f(\uh), \rho)} \leq 2 \Cf \int_\domain M \abs{\u-\uh} \abs{\rho} + \abs{\u-\uh}^{\gamma+1} \abs{\rho} \dx.
\end{equation}

The first term on the right hand side of~\eqref{eq:twoTermBound} can be bounded as
\begin{align}\label{eq:firstTermSimplification}
	\int_\domain \abs{\u-\uh} \abs{\rho} \dx &\leq \int_\domain \abs{\rho}^2 + \abs{\theta \rho} \dx \leq  \frac{3}{2} \norm{\rho}^2 + \frac{1}{2}\norm{\theta}^2 .
\end{align}
For the second term on the right hand side of~\eqref{eq:twoTermBound}, however, we make use of the bounds~\eqref{eq:integralProductInequality} and~\eqref{eq:gammaPlus2Inequality}, yielding
\begin{align*}
	\Cf\int_\domain \abs{\u-\uh}^{\gamma+1} \abs{\rho} \dx &\leq \Cf\frac{\gamma+1}{\gamma+2} \int_\domain \abs{\u-\uh}^{\gamma+2} \dx + \frac{\Cf}{\gamma+2} \int_\domain \abs{\rho}^{\gamma+2} \dx	\\
		&\leq \Cf \frac{(\gamma+1)2^{\gamma + 1}}{\gamma+2} \|\theta\|_{L^{\gamma+2}(\domain)}^{\gamma+2} + \Cf \frac{1+ (\gamma+1)2^{\gamma + 1}}{\gamma+2} \|\rho\|_{L^{\gamma+2}(\domain)}^{\gamma+2} \\
        &\leq C_1 \|\theta\|^\gamma \|\nabla \theta\|^2 + C_2 \|\rho\|^\gamma \|\nabla \rho\|^2,
\end{align*}
where
\begin{align*}
    C_1 := \Cf \Cdom 2^{\gamma + 1} \frac{\gamma+1}{\gamma+2} \quad \text{ and } \quad C_2 := C_1 + \Cf \Cdom(\gamma + 2).
\end{align*}
Combining this with~\eqref{eq:firstTermSimplification}, the error bound~\eqref{eq:beforeNonlinearBounding} becomes
\begin{equation*}
	\frac{1}{2} \frac{d}{dt} \left( \norm{\rho}^2 \right) + \diff \norm{ \nabla \rho}^2 \leq \norm{\theta_t}^2 + 4M \Cf \norm{\theta}^2 + \expconst \norm{\rho}^2 + C_1 \norm{\theta}^\gamma \norm{\nabla \theta}^2 + C_2 \norm{\rho}^\gamma \norm{\nabla \rho}^2.
\end{equation*}
Integrating through time, we find that
\begin{equation*}
	\norm{\rho(t)}^2 + \int_0^t \diff \norm{ \nabla \rho}^2 \dt  \leq \delta(\theta)^2 + \expconst \int_0^t \norm{\rho}^2 \dt + C_2 \int_0^t \norm{\rho}^\gamma \norm{\nabla \rho}^2 \dt,
\end{equation*}
where we observe that, by construction, $\rho(0) = 0$, and the functional $\delta$ is given by
\begin{equation*}
	\delta(\theta)^2 := \int_0^T\|\theta_t\|^2 + 4M \Cf \|\theta\|^2 + C_1 \|\theta\|^\gamma \|\nabla \theta\|^2 \dt.
\end{equation*}
The inequality $a^\gamma b \leq \left( a^2 + b \right)^{1 + \frac{\gamma}{2}}$ (a consequence of Young's inequality) then implies that
\begin{align}\label{eq:beforeHiding}
	\norm{\rho(t)}^2 + &\int_0^t \diff \norm{ \nabla \rho}^2 \ds \leq \delta(\theta)^2 + \expconst \int_0^t \norm{\rho}^2 \ds + C_2 \sup_{s \in [0,t]} \norm{\rho(s)}^\gamma  \int_0^t \norm{\nabla \rho}^2 \dt  	\notag \\
		&\leq \delta(\theta)^2 + \expconst \int_0^t \norm{\rho}^2 \ds + C_2\left( \sup_{s \in [0,t]}\norm{\rho(s)}^2 + \int_0^t \norm{\nabla \rho}^2 \ds \right)^{1 + \frac{\gamma}{2}}.
\end{align}

To bound the final terms on the right hand side using $\delta$, suppose that the maximum size $h_{\max}$ of the mesh used to partition the domain $\domain$ is small enough that for $h < h_{\max}$, the reconstruction error $\theta$ satisfies
\begin{equation*}
	\delta(\theta) \leq C_2^{-\gamma}\left( 4 e^{\expconst T} \right) ^{-\frac{2 + \gamma}{2\gamma}},
\end{equation*}
implying that
\begin{align*}
    C_2 \left( 4 \delta(\theta)^2 e^{\expconst T} \right)^{1 + \frac{\gamma}{2}} \leq \delta(\theta)^2.
\end{align*}
Consider the set
\begin{equation*}
	I = \left\{ \tau \in [0,T] \, : \, \sup_{s \in [0, \tau]} \norm{\rho(s)}^2 + \int_0^\tau \diff \norm{ \nabla \rho}^2 \ds  \leq 4 \delta(\theta)^2 e^{\expconst T} \right\}.
\end{equation*}
Upon observing that, by construction, we have $\rho(0) = 0$, this set is clearly not empty since it contains $\tau = 0$.
Moreover, the continuity of $\rho$ in time implies that $I$ must be closed, and thus the maximum of the set is well defined.
Thus denoting $\tau^* = \max I$, we suppose that $\tau^* < T$.
Then, for $t \leq \tau^*$:
\begin{equation*}
	C_2 \left( \sup_{s \in [0, t]} \norm{\rho(s)}^2 + \int_0^t \diff \norm{ \nabla \rho}^2 \ds \right)^{1 + \frac{\gamma}{2}} \leq C_2 \left( 4 \delta(\theta)^2 e^{\expconst T} \right)^{1 + \frac{\gamma}{2}} \leq \delta(\theta)^2.
\end{equation*}
Substituting this into~\eqref{eq:beforeHiding} we find that, for $t \leq \tau^*$
\begin{equation*}
	\norm{\rho(t)}^2 + \int_0^t \diff \norm{ \nabla \rho}^2 \ds \leq 2\delta(\theta)^2 + \expconst \int_0^t \norm{\rho}^2 \ds,
\end{equation*}
from which Gronwall's inequality implies
\begin{equation*}
	\norm{\rho(t)}^2 + \int_0^t \diff \norm{ \nabla \rho}^2 \ds \leq 2\delta(\theta)^2 e^{\expconst T}.
\end{equation*}
Since this is true for all $t$ (and the integral on the left is non-decreasing), it follows that
\begin{equation*}
	\sup_{s \in [0,t]} \norm{\rho(s)}^2 + \int_0^t \diff \norm{ \nabla \rho}^2 \ds \leq 2\delta(\theta)^2 e^{\expconst T},
\end{equation*}
which contradicts the assumption that $\tau^* < T$ due to the continuity of $\rho$ in time.

Consequently, we have
\begin{equation*}
	\norm{\rho(t)}^2 + \int_0^t \diff \norm{ \nabla \rho}^2 \ds \leq 2 e^{\expconst s} \int_0^t \norm{\theta_t}^2 + 4M \Cf \norm{\theta}^2 + C_1 \norm{\theta}^\gamma \norm{\nabla \theta}^2 \ds,
\end{equation*}
for any $s \in [0,T]$.
Applying the triangle inequality and invoking the bound~\eqref{eq:associatedEllipticBound} for $\theta$ then produces the required bounds.
\end{proof}

We note that the above error bounds are computable. Indeed, assuming the existence of the constant $M$ bounding the $L^\infty$-norm of the finite element solution $\uh$ is not unreasonable since we can assume to have computed it, and thus have access to its maximal and minimal values. Hence both bounds of Theorem~\ref{aposteriori}  are computable since they depend only on the discrete solution and problem data.

The error \emph{indicator} which we use to mark mesh elements for either refinement or coarsening is derived from the a posteriori error bound of Theorem~\ref{aposteriori} and is of the form
\begin{align*}
	\epsilon E_1(\u^{\timestep})^2 + E_0\Big(\frac{\u^{\timestep} - \u^{\timestep-1}}{\tau}\Big)^2,
\end{align*}
which may naturally be broken into contributions from each element by observing the structure of $E_\reg$ in~\eqref{eq:ellipticEstimator}. We remark that a posteriori bounds for the time discretisation by the Crank-Nicolson method are also available \cite{AMN06,BKM12}. Given the nature of the simulations, however, whereby the length-scales present do not change over time, (but only in position,) the incorporation of a full-space time a posteriori analysis in the spirit of \cite{BKM12} was not deemed necessary in this case. Crucially, however, the modified Crank-Nicolson method of \cite{BKM12} was used in the present context of temporal mesh-modification.

Figure~\ref{fig:meshes} shows some examples of the computational meshes used to obtain the results of Section~\ref{sec:results}, reporting the number of elements saved compared to an equivalent uniform mesh in each case (which may be used as a rough estimate of the computational effort required to compute the solution).
What this clearly demonstrates is the effectiveness of the resulting adaptive scheme, since the number of elements required is reduced by over 50\% in each case and typically dramatically more.
Moreover, examining the areas in which the algorithm has opted to refine or coarsen the mesh indicates that computational effort (in the form of high mesh resolution) is tightly targeted at the areas where it is required, demonstrating that this indicator works well in driving the adaptive algorithm.

\begin{figure}%
	\hfil%
	\subcaptionbox{Adapted mesh for the solution shown in Fig.~\ref{fig:circlepatches_240} containing 18,418 elements (73\% saving over equivalent uniform mesh)}[0.45\textwidth]{%
        \includegraphics[width=0.45\textwidth]{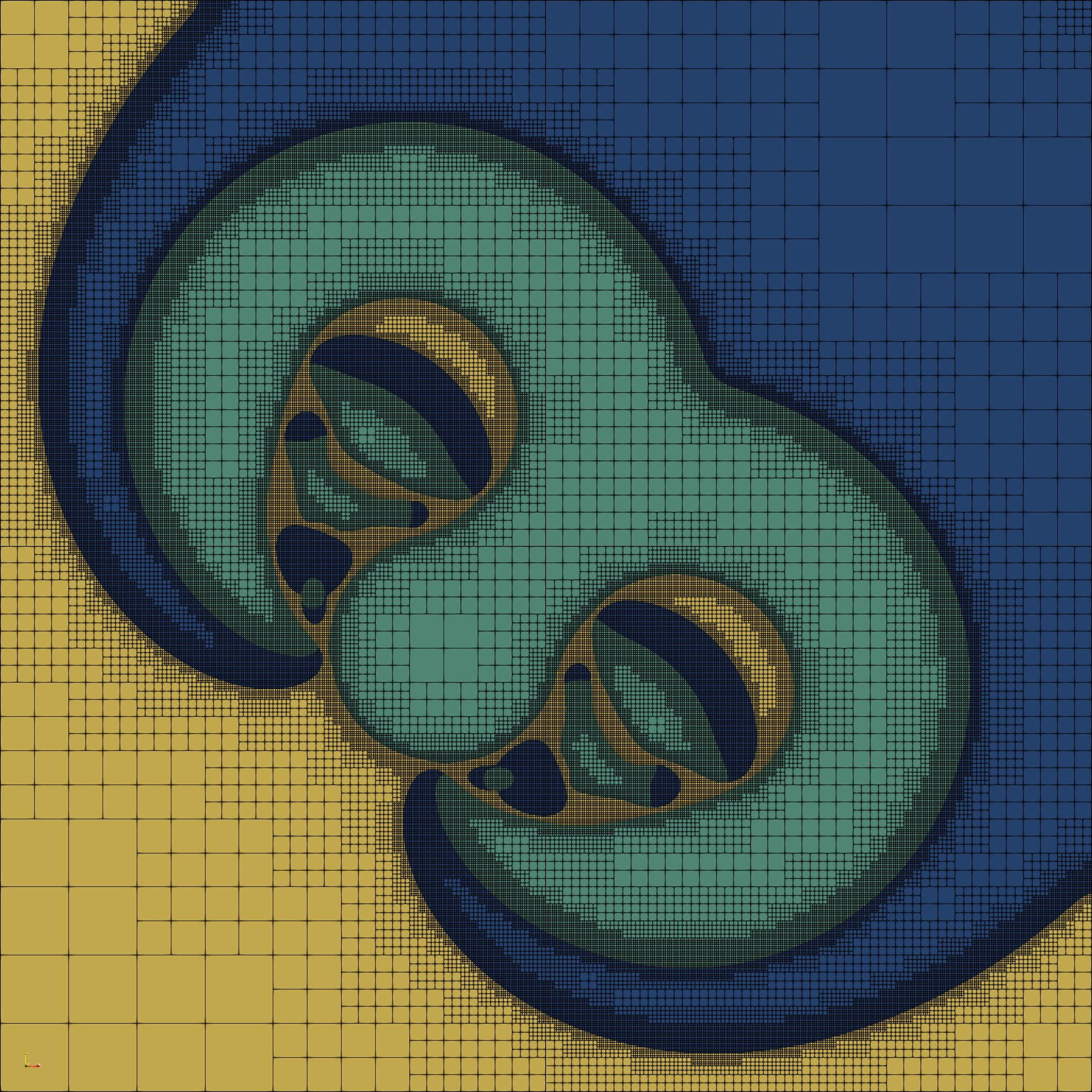}
    }\hfil%
    \subcaptionbox{Adapted mesh for the solution shown in Fig.~\ref{fig:e2=1_e3=1}, containing 8,935 elements (83\% saving over equivalent uniform mesh)}[0.45\textwidth]{%
        \includegraphics[width=0.45\textwidth]{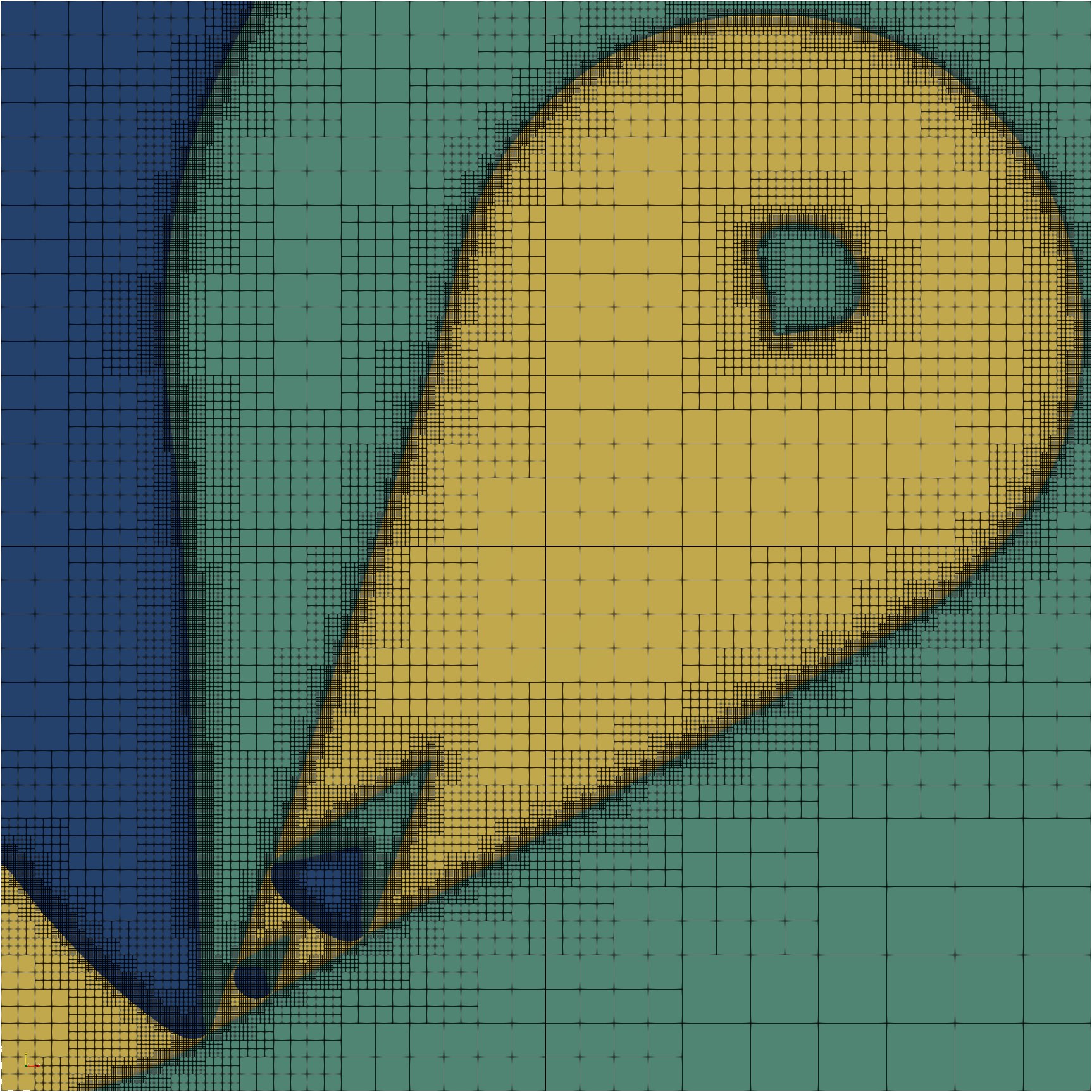}
    }\hfil
    \\[1.0em]
    \hfil%
    \subcaptionbox{Adapted mesh for the solution shown in Fig.~\ref{fig:glider_290}, containing 5,953 elements (91\% saving over equivalent uniform mesh)}[0.45\textwidth]{%
        \includegraphics[width=0.45\textwidth]{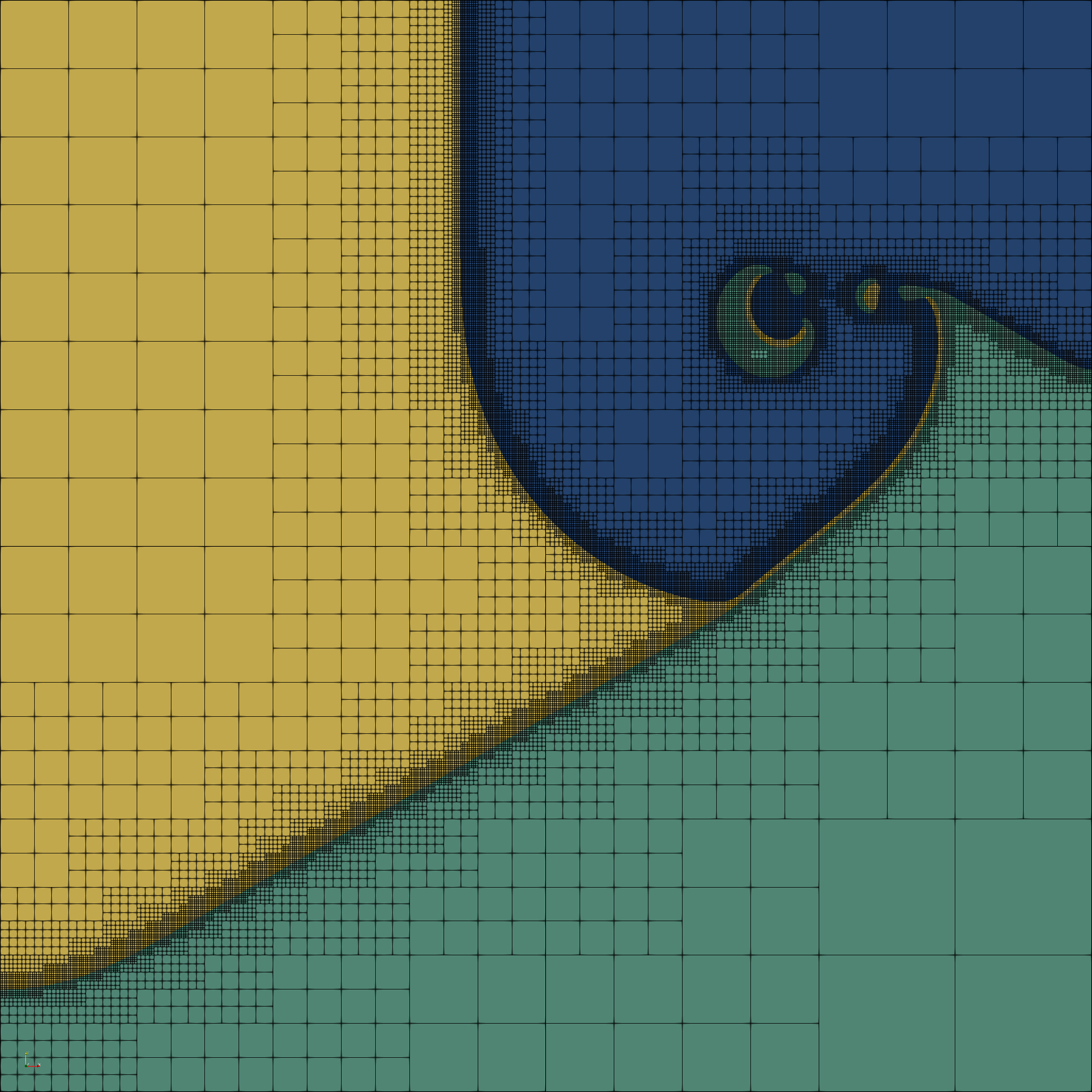}
    }\hfil%
    \subcaptionbox{Adapted mesh for the solution shown in Fig.~\ref{fig:cycliccompetition:3D}, containing 1,015,660 elements (51\% saving over equivalent uniform mesh)\label{fig:cycliccompetition:3Dmesh}}[0.45\textwidth]{%
        \includegraphics[width=0.45\textwidth]{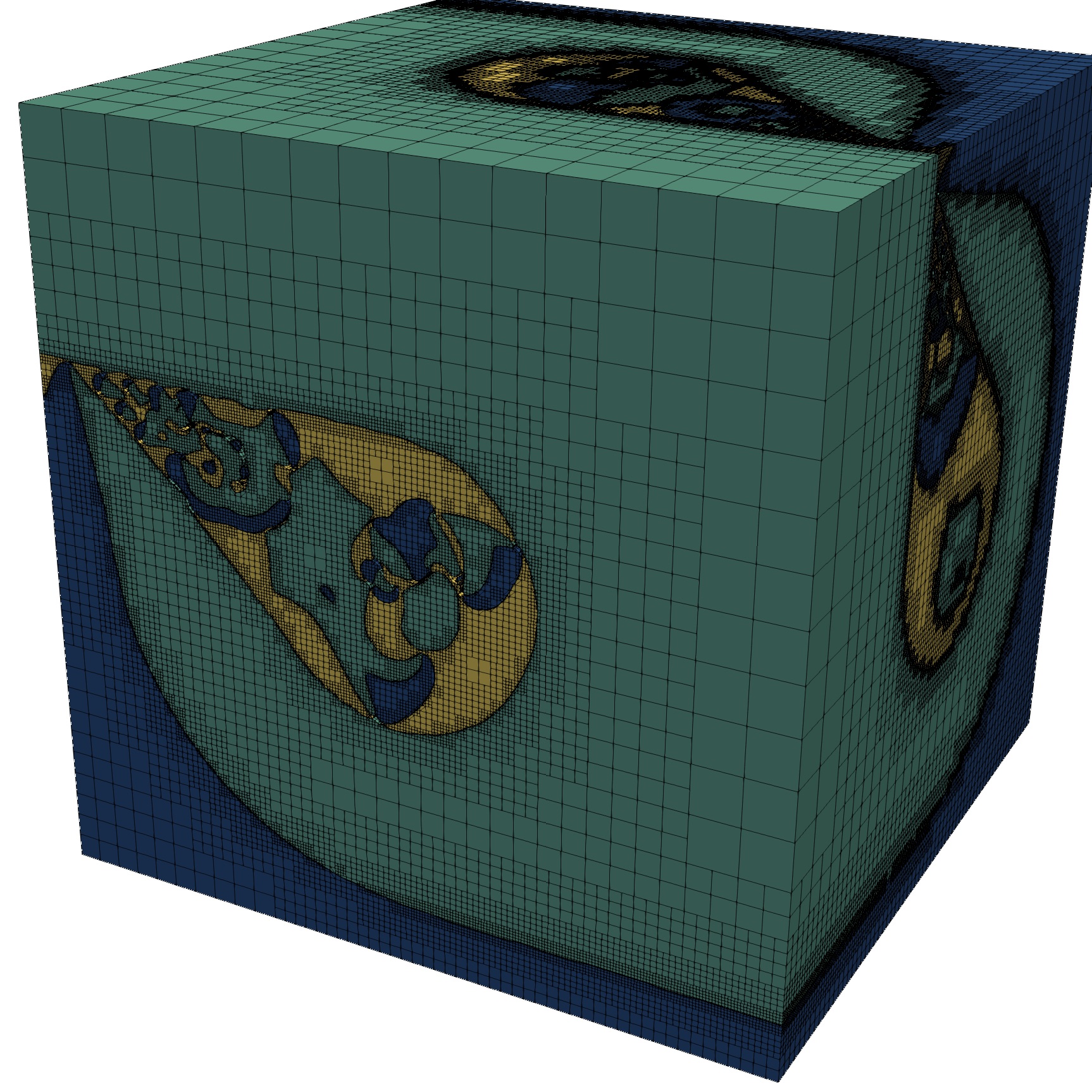}
    }\hfill
    \caption{Some examples of meshes produced by the adaptive algorithm, demonstrating the reduction in the number of elements required compared to a uniform mesh with the same resolution around the layers.}
    \label{fig:meshes}
\end{figure}

\section{Discussion and conclusions}
\label{sec:conclusion}
Although pattern formation and travelling waves in reaction-diffusion models are widely researched, it seems that --- surprisingly --- we still lack a clear understanding of all types of patterns in some well-known systems, such as the Lotka-Volterra three-species competition model~\cite{Petrovskii2001, mimura2015, MuN11, AM12, Ch12, C15}. In this paper, we study such a model with a cyclic competition interaction structure and reveal several novel patterns of population waves in this system which have not been reported before, but may have important applications in biology, chemistry or game theoretical models. To achieve our goal, we utilised a novel, adaptive numerical method driven by an a posteriori error estimate which allowed us to efficiently run simulations in both two and three spatial dimensions.

For a cyclic competition system without space, it is well known that the eventual result is a single exclusive species, the densities of the other species being droped below an extinction threshold  ~\cite{ML75,AM12}.
Adding a spatial dependence, however, allows for long term species coexistence.
A major question is therefore to understand the dynamical patterns which allow the domain in which the species coexist to spread into areas occupied by a single species.
The richness we observe in the system, in terms of its many varied dynamics regimes, stems from the fact that we allow the three species to have different mobilities.
Note that this is a natural case which is observed, for example, in the cyclic system of coral reef invertebrates~\cite{Buss79}, and is modelled here by allowing different diffusion coefficients for each species.

In previous studies of model~\eqref{eq:scaledsystem}, it was reported that the area of mutual coexistence can spread through space via spiral patterns, interacting patches, or different types of travelling wedges~\cite{mimura2015, AM12, C15}. Here we demonstrate new spatially regular patterns of spread. Interestingly, in their work Contento and co-authors ~\cite{C15} hypothesised the existence of droplet-like structures in a spreading wedge which is close to that shown in Fig.~\ref{fig:cycliccompetition:icecream}, although they did not find a practical realisation of such a pattern and assumed that it would be unstable~\cite{C15}. Here we found a stable pattern consisting of droplets in a spreading wedge. It is worth observing, however, that in our case the mechanism by which droplets form is slightly different. For example, in our case the spread is based on pairwise interactions between species and, unlike in the cited work, the corresponding 1D case does not allow the spread of a travelling pulse involving all three species. Moreover, the authors of~\cite{C15} hypothesised that their pattern would exist under \emph{conditional} cyclic competition, while the droplet-shaped structures in Fig.~\ref{fig:cycliccompetition:icecream} are found in \emph{classical} (i.e. unconditional) cyclic competition. It therefore still remains to be determined whether the patterns predicted by Contento and co-authors are actually possible in the case of conditional cyclic competition involving local bistability.

The pattern of spread shown in Fig.~\ref{fig:stripes} is of particular interest, not only because of the apparent regularity in the direction of movement and irregularity in the opposite direction. A novel feature of the pattern seems to be the coexistence of two different waves moving towards the left hand boundary: one wave is a wave of regularity composed of almost parallel bands in the middle and two wedge-shaped waves of chaos on each side of the bands. Our simulations show a long term coexistence of both types of waves. Using this pattern one can describe a complex spread of species involving regular and irregular population patches.

These newly demonstrated patterns of travelling waves with spatially regular structure can be interpreted in terms of the definition of \emph{convective stability} introduced by~\cite{sherratt2014mathematical}. Indeed, the developing regular travelling structures are convectively stable since they emerge as a result of complex spatio-temporal interactions. However, they are not globally stable, as shown in Fig.~\ref{fig:stripes}: depending on the initial conditions, both the waves of regularity and the waves of chaos can be simultaneously realised in the same system.

The spatially regular geometric shapes found by this study to exist in the wake of spreading waves may have applications in the life sciences. It is well known that the distribution of vegetation in semiarid or other regions show regular band-shaped patterns which slowly move over time~\cite{montana1992, lefever97, rietkerk2008}. The common point of view on the origin of these vegetation patterns is the interaction between the soil and plants controlled by the level of moisture via various mechanisms such as Turing pattern formation or periodic travelling waves~\cite{klausmeier99, Dagbovie14}. Here we suggest an alternative mechanism for the formation such bands, due to the interaction of competitive plant species which, for instance, does not require the existence of a steady gradient in the system. 

The pattern containing regular droplet-shaped structures shown in Fig.~\ref{fig:cycliccompetition:icecream} and Fig.~\ref{fig:parameterSearch} can potentially be realised on growing domains~\cite{CGM02, page2005} such as in the pigmentation and relief-like patterns found on mollusc shells, which remain a long standing question~\cite{meinhardt1992, Fowler92}. Previously, it was suggested that regular patterns in mollusc shells are the result of inhibitor-activator type interactions via a Turing mechanism. Here we show that similar patterns can be produced by a cyclic competition type of interaction via \emph{a non-Turing mechanism}. Finally, the transient glider-type patterns shown in Fig.~\ref{fig:cycliccompetition:glider} in the case of conditional cyclic competition provide an example of a new mechanism of patchy spread of invasive species. This new pattern can be used to improve our understanding and modelling of biological invasions since empirical observations often report that the spread of a species into the habitat occupied by another species occurs via the propagation of irregular patches~\cite{shigesada1997}. This also supports the recent ecological theory of \emph{invasional meltdown}, when an invasive species facilitates the invasion of some other invasive species~\cite{o2003}. Note that unlike the original concept of invasional meltdown, suggesting mutual facilitation of invasion of species via mutualistic interactions, in our case we consider the case of antagonistic competitors~\cite{simberloff1999}.

Our results have also allowed us to improve our understanding of the role of dimensionality on the pattern formation in the considered type of models.
This can be seen by comparing Fig.~\ref{fig:cycliccompetition:icecream} and Fig.~\ref{fig:cycliccompetition:glider} alongside the corresponding 1D simulations (not shown here for the sake of simplicity).
In one spatial dimension, a wave of mutual spread of three species is impossible for the given parameters: only pairwise switch waves are observed. With two spatial dimensions, the droplet-shape pattern can emerge even through it is simply the result of the pairwise interaction of plane waves, as shown in Fig.~\ref{fig:schematic} (the round-shaped interface can be formally considered as a plain wave in polar coordinates).
Thus, increasing from one to two spatial dimensions allows for species coexistence through a structure which was previously impossible.
Interestingly, adding a third dimension continues to allow the persistence of the droplet-shape structure, although we argue here that the pattern remains primarily two-dimensional since the three dimensional droplets are observed to have a prismatic structure, and can still be described via pairwise interactions of locally plain prismatic waves.

Bearing in mind the large domains and long time scales required for the full solution dynamics to evolve, it is clear that the computational cost of these simulations would be prohibitively expensive using a uniform mesh, an issue which is amplified in three spatial dimensions.
Instead, the adaptive numerical method described in Section~\ref{sec:numericalscheme}, based on the novel computable error indicator derived in Section~\ref{sec:apost}, allows us to obtain accurate simulations using just a fraction of the computational effort of an equivalent non-adaptive scheme, as demonstrated by the adapted computational meshes shown in Fig.~\ref{fig:meshes}.
The savings this method provides means that highly accurate simulations of this model are within reach of researchers without needing access to high performance computing facilities.
Moreover, since the error analysis of Section~\ref{sec:apost} is applicable to a much wider class of semilinear reaction-diffusion problems, the adaptive method which we describe can be easily applied by researchers wishing to study other phenomena.

We should point out that our numerical investigation of the model cyclic Lotka-Voltera system is by no means exhaustive. 
We do not claim that combined with the previous studies of the system~\cite{mimura2015, MuN11, AM12, Ch12, C15} we have now completed a full classification of possible patterns. 
Further research will be needed specially to further explore the case of conditional cyclic competition. 
Another interesting direction is to further explore the influence of the number of spatial dimensions on the species persistence. 
In other words, it is interesting to verify whether or not adding a third dimension will enhance the coexistence of all species and which possible patterns of mutual coexistence can occur. 
This is a biologically relevant question which is important for understanding, for example, the coexistence of competing bacterial strains or microalgae in laboratory and natural conditions.

\section*{Acknowledgements}
EHG acknowledges financial support by The Leverhulme Trust (grant no.~RPG-2015-306). 
OJS acknowledges financial support by the EPSRC (grant no.~EP/P000835/1).
This research used the ALICE High Performance Computing Facility at the University of Leicester. 
The authors thank Ruslan Davidchack (University of Leicester) for his encouragement and support of this project.

\bibliographystyle{siam}
\bibliography{cyclic-competition-arxiv.bbl}

\end{document}